\definecolor{MyDarkBlue}{rgb}{0.15,0.25,0.45}
\tikzset{
    int/.style={
        line width=1.5pt,
        line cap=round,
        dash pattern=on 0pt off 3pt
    },
    ambi/.style={
        line width=1.2pt,
        line cap=round,
        dash pattern=on 2pt off 3pt
    },
    anti/.style={decorate, draw=black,
        decoration={zigzag,segment length = 1mm, amplitude = 0.7mm}}
}
\let\fn\footnote
\renewcommand{\footnote}[1]{\linespread{1.1}\fn{#1}\linespread{1.29}}
\makeatletter\renewcommand{\section}{\@startsection
    {section}{1}{\z@}{-3.5ex plus -1ex minus
        -.2ex}{2.3ex plus .2ex}{\bf\mathversion{bold} }}
\makeatletter\renewcommand{\subsection}{\@startsection{subsection}{2}{\z@}{-3.25ex
        plus -1ex minus
        -.2ex}{1.5ex plus .2ex}{\bf\mathversion{bold} }}
\makeatletter\renewcommand{\subsubsection}{\@startsection{subsubsection}{3}{-2.45ex}{-3.25ex
        plus -1ex minus -.2ex}{1.5ex plus .2ex}{\it }}
\renewcommand{\thesection}{\arabic{section}}
\renewcommand{\thesubsection}{\arabic{section}.\arabic{subsection}}
\renewcommand{\@seccntformat}[1]{\@nameuse{the#1}.~~}
\renewcommand{\theequation}{\thesection.\arabic{equation}}
\makeatletter \@addtoreset{equation}{section}
\def\Ddots{\mathinner{\mkern1mu\raise\p@
        \vbox{\kern7\p@\hbox{.}}\mkern2mu
        \raise4\p@\hbox{.}\mkern2mu\raise7\p@\hbox{.}\mkern1mu}}
\newtheorem{thm}{Theorem}[section]
\renewcommand{\thethm}{\thesection.\arabic{thm}}
\newtheorem{lemma}[thm]{Lemma}
\newtheorem{algorithm}[thm]{Algorithm}
\newtheorem{theorem}[thm]{Theorem}
\newtheorem{corollary}[thm]{Corollary}
\newcommand{\wave}{\mathop\square}
\renewcommand{\appendices}{
    \section*{Appendix}\label{appendices}\setcounter{subsection}{0}
    \addcontentsline{toc}{section}{Appendix}
    \setcounter{equation}{0}
    \makeatletter
    \renewcommand{\theequation}{\Alph{subsection}.\arabic{equation}}
    \renewcommand{\thesubsection}{\Alph{subsection}}
    \renewcommand{\thethm}{\Alph{subsection}.\arabic{thm}}
    \@addtoreset{equation}{subsection}
    \@addtoreset{thm}{subsection}
    \makeatother
}
\newcommand{\makecommand}[3]{%
    \foreach \i in #3 {%
        \expandafter\xdef\csname #1\i\endcsname{\noexpand#2{\unexpanded\expandafter{\i}}}%
    }%
}
\newcommand{\latinalphabet}{A,a,B,b,C,c,d,D,E,e,F,f,G,g,H,h,I,i,J,j,K,k,L,l,M,m,N,n,O,o,P,p,Q,q,R,r,S,s,T,t,U,u,V,v,W,w,X,x,Y,y,Z,z}
\newcommand{\Iint}{\pmb{\pitchfork}}
\def\slasha#1{\setbox0=\hbox{$#1$}#1\hskip-\wd0\hbox to\wd0{\hss\sl/\/\hss}}
\def\periodb#1{\setbox0=\hbox{$#1$}#1\hskip-\wd0\hbox to\wd0{-}}
\newcommand{\delder}[1]{\frac{\delta}{\delta #1}}   		% partielle ableitung, 1 argument
\newcommand{\comment}[1]{}     				% remark
\def\tyng(#1){\hbox{\tiny$\yng(#1)$}}			% small Young diagram
\def\tyoung(#1){\hbox{\tiny$\young(#1)$}}			% small Young diagram
\newcommand{\eand}{~~~\mbox{and}~~~}
\newcommand{\ewith}{~~~\mbox{with}~~~}
\newcommand{\beq}{\begin{eqnarray}}
    \newcommand{\eeq}{\end{eqnarray}}
\begin{document}
    \begin{titlepage}
        \begin{flushright}
            EMPG--20--18
        \end{flushright}
        \vskip2.0cm
        \begin{center}
            {\LARGE \bf Symmetry Factors of Feynman Diagrams\\[0.3cm] and the Homological Perturbation Lemma}
            \vskip1.5cm
            {\Large Christian Saemann and Emmanouil Sfinarolakis}
            \setcounter{footnote}{0}
            \renewcommand{\thefootnote}{\arabic{thefootnote}}
            \vskip1cm
            {\em Maxwell Institute for Mathematical Sciences\\
                Department of Mathematics, Heriot--Watt University\\
                Colin Maclaurin Building, Riccarton, Edinburgh EH14 4AS,
                U.K.}\\[0.5cm]
            {Email: {\ttfamily c.saemann@hw.ac.uk~,~es73@hw.ac.uk}}
        \end{center}
        \vskip1.0cm
        \begin{center}
            {\bf Abstract}
        \end{center}
        \begin{quote}
            We discuss the symmetry factors of Feynman diagrams of scalar field theories with polynomial potential. After giving a concise general formula for them, we present an elementary and direct proof that when computing scattering amplitudes using the homological perturbation lemma, each contributing Feynman diagram is indeed included with the correct symmetry factor.
        \end{quote}
    \end{titlepage}
    
    \tableofcontents

    \section{Introduction and results}
    
    The Batalin--Vilkovisky (BV) formalism~\cite{Batalin:1981jr,Schwarz:1992nx} gives a formulation of any classical field theory as a differential $Q_{\rm BV}=\{S_{\rm BV},-\}$ on the graded commutative algebra generated by fields, antifields, ghosts, antifields of ghosts etc.~and all of their derivatives. Here, $Q_{\rm BV}$ is called the {\em BV differential}, $S_{\rm BV}$ is the {\em BV action} and the Poisson bracket is induced by the canonical symplectic form on BV field space. This differential graded commutative algebra is precisely the dual of a {\em strong homotopy Lie algebra} or {\em $L_\infty$-algebra}, for short, cf.~\cite{Jurco:2018sby} and references therein. 
    
    Recall that an $L_\infty$-algebra $\frL$ consists of a graded vector space $\frL=\oplus_{n\in \IZ}\frL_n$ together with a set of higher products $\mu_n:\frL^{\wedge n}\rightarrow \frL$ for all $n\in \IN^+$. The lowest product $\mu_1$ is a differential. For an $L_\infty$-algebra arising from a field theory, it encodes the free, linearized aspects of the theory, such as gauge symmetries and equations of motion. All higher products $\mu_{n\geq 2}$ describe interactions, covariantizations of linearized gauge symmetries, etc.
    
    This perspective is useful for a number of applications, in particular for discussing perturbative quantum field theories, see e.g.~\cite{Nutzi:2018vkl,Reiterer:2019dys,Macrelli:2019afx,Arvanitakis:2019ald,Jurco:2019yfd,Borsten:2020XXX}. Any $L_\infty$-algebra comes with a family of quasi-isomorphic (read: equivalent) $L_\infty$-algebras known as {\em minimal models}. For an $L_\infty$-algebra corresponding to a field theory, the graded vector space underlying the minimal models contains the asymptotically free fields while the higher products are precisely the tree level scattering amplitudes~\cite{Kajiura:0306332,Jurco:2018sby,Nutzi:2018vkl,Macrelli:2019afx}. Explicitly, the minimal model is computed via the homological perturbation lemma~\cite{Gugenheim1989:aa,gugenheim1991perturbation,Crainic:0403266}, see also~\cite{Gwilliam:2012jg,JohnsonFreyd:2012ww,Doubek:2017naz}, and this lemma produces the tree level perturbation expansion usually employed in quantum field theory.
    
    As pointed out in~\cite{Doubek:2017naz,Pulmann:2016aa}, the BV formalism gives a clear indication of how to extend the homological perturbation lemma to capture also the full quantum case. This leads to a construction of minimal models of {\em quantum} or {\em loop homotopy algebras} as introduced in~\cite{Zwiebach:1992ie,Markl:1997bj}. While the focus in~\cite{Doubek:2017naz,Pulmann:2016aa} was mostly on computing the effective action via the homological perturbation lemma in the BV picture, the dual, coalgebra picture was explored in~\cite{Jurco:2019yfd}, where very general recursion relations for quantum scattering amplitudes were derived. Both perspectives generate the familiar expansion in Feynman diagrams, merely in two opposite and dual directions.
    
    An evident question in this context is if and how precisely the homological perturbation lemma reproduces the exact symmetry factors of the individual Feynman diagrams. It is the main goal of this paper to provide an elementary and direct answer to this question.
    
    We start with a review of scattering amplitudes, homotopy algebras and the homological perturbation lemma. We explain in detail how the homology algebraic approach leads to what we call {\em HPL-diagrams}, which are closely related to Feynman diagrams. To be explicit and expose the crucial aspects of the approach, we limit ourselves to real scalar field theories with polynomial potentials. This also allows us to use $A_\infty$-algebras over $L_\infty$-algebras, which makes our discussion even simpler and more direct. 
    
    Next, we give a concise formula for general symmetry factors of Feynman diagrams for general scalar field theories, using the usual generating functional approach and extending the result of~\cite{Palmer:2001vq}. We then begin the comparison with HPL-diagrams with a few examples, proving a number of lemmata about their shapes and symmetry factors. Altogether, these lemmata show that in the computation of quantum scattering amplitudes via the homological perturbation lemma, all Feynman diagrams are indeed taken into account with the correct symmetry factors.
    
    While our result is certainly not surprising, we hope that our direct and explicit discussion, which is complemented by many examples, is helpful for readers trying to understand the relation between the homological algebraic approach to scattering amplitudes and the traditional Feynman diagrammatic one.

    \section{Scattering amplitudes from the homological perturbation lemma}
    
    In this section, we briefly review how the homological perturbation lemma produces the scattering amplitudes of a quantum field theory, cf.~\cite{Doubek:2017naz,Jurco:2018sby,Nutzi:2018vkl,Macrelli:2019afx,Jurco:2019yfd} as well as~\cite{Jurco:2020yyu} for a pedagogical review. We highlight the effect of the antisymmetry in the contracting homotopy, which leads to the cancellation of many diagrams.
    
    \subsection{Scattering amplitudes and homotopy algebras}
    
    For simplicity, we focus on a general scalar field theory on $d$-dimensional Minkowski $\IR^{1,d-1}$ space with action
    \begin{equation}\label{eq:action}
        S=\int \rmd^d x~\big(\caL_0+\caL_{\rm int}\big)=\int \rmd^d x~\Big( \tfrac12 \phi(x)(\wave-m^2) \phi(x)+\sum_{k\geq 3} \tfrac{1}{k!}c_k \phi^k(x)\Big)~.
    \end{equation}
    As explained e.g.~in~\cite{Jurco:2018sby,Macrelli:2019afx}, this scalar field theory can be encoded in a cyclic $L_\infty$-algebra $\frL^S$, which is simultaneously a cyclic $A_\infty$-algebra $\frA^S$.
    
    Recall that an $L_\infty$-algebra $\frL$ comes with higher products $\mu_n:\frL^{\wedge n}\rightarrow \frL$ of degree~$|\mu_n|=2-n$, where $\frL$ is a graded vector space $\frL=\oplus_{n\in\IZ}\frL_n$. Similarly, an $A_\infty$-algebra $\frA=\oplus_{n\in\IZ} \frA_n$ comes with higher products $\frm_n:\frA^{\otimes n}\rightarrow \frA$ of degree~$|\sfm_n|=2-n$. The two types of higher products then satisfy a higher or homotopy version of the Jacobi identity and the associativity relation, respectively. The precise details of these relations are irrelevant for our discussion.
    
    In a generalization of the relation between matrix algebras and matrix Lie algebras, antisymmetrizing the higher products $\sfm_n$ of an $A_\infty$-algebra $\frA$ leads to higher products for an $L_\infty$-algebra on the same graded vector space:
    \begin{equation}
        \mu_n(a_1,\ldots,a_n)=\sum_{\sigma\in S_n} \chi(\sigma;a_1,\ldots,a_n) \sfm_n(a_{\sigma(1)},\ldots,a_{\sigma(n)})~,~~~a_i\in \frA~,
    \end{equation}
    where the sum is taken over all permutations of the $a_i$ and the Koszul sign $\chi(\sigma;\ell_1,\ldots,\ell_n)$ is the sign arising from permuting the graded arguments $a_{\sigma(1)},\ldots,a_{\sigma(n)}$ into canonical order.
    
    In the homotopy algebras $\frL^S$ and $\frA^S$ encoding the field theory~\eqref{eq:action}, the subspaces $\frL^S_1=\frA^S_1$ and $\frL^S_2=\frA^S_2$ are the field and antifield spaces, respectively. The non-trivial differentials $\mu_1=\sfm_1$ are given by the kinematical operator,
    \begin{equation}
        \mu_1(\phi)=\sfm_1(\phi)=(\wave-m^2)\phi\in \frL^S_2~,~~~\phi\in \frL^S_1~.
    \end{equation}
    Furthermore, the non-trivial higher products $\mu_2$, $\mu_3$, $\ldots$ of $\frL^S$ encode the various interactions:
    \begin{equation}
        \mu_n(\phi_1,\ldots,\phi_n)=n!\,\sfm_n(\phi_1,\ldots,\phi_n)=-c_{n+1}\phi_1\ldots\phi_n\in \frL^S_2~,~~~\phi_i\in \frL^S_1~.
    \end{equation}
    
    Both $\frL^S$ and $\frA^S$ come with non-degenerate bilinear pairings $\langle-,-\rangle$ of degree~$-3$, which pair fields, i.e.~elements of $\frL^S_1=\frA^S_1$, with antifields, i.e.~elements of $\frL^S_2=\frA^S_2$, in the canonical way, producing a number. These pairings are compatible with the higher products in that they are cyclic:
    \begin{equation}\label{eq:rel_mu_m}
        \langle \phi_{n+1},\mu_n(\phi_1,\ldots,\phi_n)\rangle=\langle \phi_j,\mu_n(\phi_{j+1},\ldots,\phi_{n+1},\phi_1,\ldots,\phi_{j-1})\rangle~.
    \end{equation}
    
    In general, each homotopy algebra comes with quasi-isomorphic (i.e.~equivalent) {\em minimal models} with vanishing differential. These minimal models are all isomorphic to each other in an ordinary sense, and we will therefore often speak of {\em the} minimal model. The equivalence between the minimal model $\frL^{S\circ}$ of $\frL^S$ and $\frL^S$ itself translates into a classical equivalence of the corresponding field theories and the only theory equivalent to~\eqref{eq:action} without kinematical term is one in which the interaction vertices are the tree level scattering amplitudes. Computing the tree level scattering amplitudes is therefore tantamount to computing the minimal model of the $L_\infty$-algebra $\frL^S$.
    
    Explicitly, the formula for tree level amplitudes is 
    \begin{equation}\label{eq:scattering_amplitude_tree_level}
        \caA(\phi_1,\ldots\phi_{n+1})=\langle \phi_{n+1},\mu^\circ_n(\phi_1,\ldots,\phi_n)\rangle^\circ=\sum_{\sigma\in S_n} \langle \phi_{n+1},\sfm^\circ_n(\ell_{\sigma(1)},\ldots,\ell_{\sigma(n)})\rangle^\circ~,
    \end{equation}
    where a $\circ$ indicates the higher products and the cyclic structure in $\frL^{S\circ}$ and $\frA^{S\circ}$. Note that cyclicity of the bilinear pairing corresponds to the usual cyclic permutation symmetry of scattering amplitudes.
    
    \subsection{The homological perturbation lemma at tree level}
    
    The homological perturbation lemma reproduces precisely the usual perturbative expansion of quantum field theory at tree level and, with a deformed perturbation, also at loop level. In the following, we outline this construction, starting with tree level. 
    
    The field space $\frF=\frL^S_1$ decomposes into a direct sum of asymptotic, on-shell fields $\frF_\rmfree$ given by the kernel of the kinematic operator $\mu_1$ as well as the interacting fields $\frF_\rmint$ which propagate in the interior of a Feynman diagram, cf.~\cite{Macrelli:2019afx}. The chain complex underlying $\frL^S$ (and $\frA^S$) is then
    \begin{equation}
        \dots \xrightarrow{~~~} * \xrightarrow{~~~} \underbrace{\frF_\rmfree \oplus \frF_\rmint}_{\frL^S_1/\rm fields} \xrightarrow{~\mu_1~} \underbrace{\frF_\rmfree \oplus \frF_\rmint}_{\frL^S_2/\rm antifields} \xrightarrow{~~~} * \xrightarrow{~~~} \dots~.
    \end{equation}
    We note that there is an inverse to $\mu_1$ on $\frF_\rmint\subset \frL^S_2$, the {\em propagator} $\sfh$, which we trivially continue to a map of degree~$-1$ on all of $\frL^S_2$ and $\frL^S$ by setting it to zero everywhere else. In particular, $\frF_\rmint$ is the image of $\sfh$ for scalar field theory.
    
    The minimal models $\frL^{S\circ}$ of $\frL^S$ (and $\frA^{S\circ}$ of $\frA^S$) have underlying chain complex
    \begin{equation}
        \ldots \xrightarrow{~~~} * \xrightarrow{~~~} \underbrace{\frF_\rmfree}_{\frL^{S\circ}_1/\rm fields} \xrightarrow{~0~} \underbrace{\frF_\rmfree}_{\frL^S_2/\rm antifields} \xrightarrow{~~~} * \xrightarrow{~~~} \ldots~,
    \end{equation}
    and we have the trivial projection and embedding
    \begin{subequations}
        \begin{equation}
            \sfp:\frL^S\rightarrow \frL^{S\circ}
            \eand 
            \sfe:\frL^{S\circ}\hookrightarrow\frL^{S}
            \ewith \sfp\circ \sfe=\sfid~.
        \end{equation}
        We can use the propagator to make the quasi-isomorphism between both chain complexes explicit. That is, $\sfh$ is a contracting homotopy between the identity on $\frL^S$ and the concatenation $\sfe\circ \sfp$:
        \begin{equation}\label{eq:contracting_homotopy}
            \sfid-\sfe\circ\sfp=\sfm_1\circ\sfh+\sfh\circ \sfm_1~.
        \end{equation}
        Furthermore, $\sfh$ can always be chosen such that 
        \begin{equation}\label{eq:contractingBasic}
            \sfp\circ\sfh = \sfh\circ\sfe = \sfh\circ\sfh = 0~,~~~\sfp\circ \sfm_1 = \sfm_1\circ\sfe = 0~.
        \end{equation}
    \end{subequations}
    The homological perturbation lemma (HPL)~\cite{Gugenheim1989:aa,gugenheim1991perturbation,Crainic:0403266} can now be used to lift this quasi-isomorphism of chain complexes $(\frL^S,\mu_1)\cong (\frL^{S\circ},0)$ to a quasi-isomorphism of $L_\infty$-algebras $(\frL^S,\mu_\bullet)\cong (\frL^{S\circ},\mu^\circ_{\bullet})$ by regarding $\mu_2$, $\mu_3$, $\ldots$ as perturbations. Since these describe precisely the interaction terms in~\eqref{eq:action}, and thus the perturbations of the free theory, it is not surprising that we recover the usual perturbative expansion.
    
    The treatment of perturbations is technically somewhat simpler\footnote{Note that this is a pedagogical choice. Conceptually, the discussion for $L_\infty$-algebra is also straightforward: one merely has to insert symmetrizers in all compositions of maps. These, however, would awkwardly distort and lengthen all formulas.} for $A_\infty$-algebras, and we restrict ourselves to these in the following.  Moreover, for the link to Feynman diagrams, it is helpful to discuss the HPL using the coalgebra description of $A_\infty$-algebras. Here, we consider the tensor algebra of the grade-shifted vector space $\sfV=\oplus_{n\in\IZ}\sfV_n$ with $\sfV_n\coloneqq\frL^S_{n+1}$,
    \begin{equation}
        \otimes^\bullet \sfV=\bigoplus_{k=0}^\infty \sfV^{\otimes k}=\IR\oplus \sfV\oplus (\sfV\otimes \sfV)\oplus \ldots~.
    \end{equation}
    Note that in the case of scalar field theory, only $\sfV_0$ and $\sfV_1$ are non-trivial, describing fields and antifields, respectively. Taking into account the degree shift, the higher products $\sfm_n$ are now all of degree~$1$ and can be extended to coderivations,
    \begin{equation}
        \sfM_k(\phi_1,\ldots,\phi_n)=\sum_{j=0}^{n-k}\phi_{1}\otimes \ldots \otimes \phi_j\otimes \sfm_k(\phi_{j+1},\ldots,\phi_{j+k})\otimes \phi_{j+i+1}\otimes \ldots \otimes \phi_n~,
    \end{equation}
    which combine into the {\em codifferential}
    \begin{equation}
        \sfD=\sfD_0+\sfD_\rmint~,~~~\sfD_0=\sfM_1\eand \sfD_\rmint=\sfM_2+\sfM_3+\ldots~.
    \end{equation}
    This codifferential satisfies $\sfD^2=0$, which is equivalent to the homotopy associativity conditions.
    
    The maps $\sfp$, $\sfe$ and $\sfh$ are readily lifted to the coalgebra picture as follows\footnote{This is known as the ``tensor trick.''}:
    \begin{equation}\label{eq:lift_to_tensor_algebra}
        \sfP_0|_{\bigotimes^k \sfV} \coloneqq \sfp^{\otimes k}~,~~~
        \sfE_0|_{\bigotimes^k \sfV^\circ} \coloneqq \sfe^{\otimes k}~,~~~~\sfH_0|_{\bigotimes^k \sfV} \coloneqq \sum_{i+j=k-1}\sfid^{\otimes i}\otimes\sfh\otimes(\sfe\circ\sfp)^{\otimes j}~,
    \end{equation}
    and we have the relations
    \begin{equation}\label{eq:contractingBasic_H0}
        \begin{gathered}
            \sfid-\sfE_0\circ\sfP_0 = \sfD_0\circ\sfH_0+\sfH_0\circ \sfD_0~,\\
            \sfP_0\circ\sfE_0 = \sfid~,~~~
            \sfP_0\circ\sfH_0 = \sfH_0\circ\sfE_0 = \sfH_0\circ\sfH_0 = 0~,~~~
            \sfP_0\circ \sfD_0 = \sfD_0\circ\sfE_0 = 0~.
        \end{gathered}
    \end{equation}
    
    The homological perturbation lemma then states that we also have a contracting homotopy $\sfH$ after perturbing $\sfD_0$ to $\sfD=\sfD_0+\sfD_\rmint$ with the following maps:
    \begin{equation}\label{eq:hpl_relations}
        \begin{aligned}
            \sfP &= \sfP_0\circ(\sfid+\sfD_\rmint\circ\sfH_0)^{-1},~~~&
            \sfH &= \sfH_0\circ(\sfid+\sfD_\rmint\circ\sfH_0)^{-1}~,\\
            \sfE &= (\sfid+\sfH_0\circ\sfD_\rmint)^{-1}\circ\sfE_0~,~~~&
            \sfD^\circ &= \sfP\circ\sfD_\rmint\circ\sfE_0~,
        \end{aligned}
    \end{equation}
    where
    \begin{equation}
        \sfD^\circ=\sfD^\circ_2+\sfD^\circ_3+\ldots~,
    \end{equation}
    is the full codifferential on the minimal model. The inverse of the map $\sfid+\sfD_\rmint\circ\sfH_0$ exist because $\sfD_\rmint$ is a small perturbation, and this inverse is to be regarded as the evident geometric series.    
    
    Recall that $\sfD^\circ$ encodes the higher products $\sfm_n^\circ$ of the minimal model and thus the tree level scattering amplitudes~\eqref{eq:scattering_amplitude_tree_level}. Because of~\eqref{eq:hpl_relations}, we can compute $\sfD^\circ$ from the recursion 
    \begin{equation}\label{eq:tree_recursion}
        \sfD^\circ=\sfP_0\circ\sfD_\rmint\circ\sfE~,~~~\sfE=\sfE_0-\Iint\!\circ\sfE\ewith \Iint\coloneqq\sfH_0\circ\sfD_\rmint~.
    \end{equation}
    
    \subsection{Example: the four-point amplitude}
    
    To illustrate the above constructions, let us discuss an example. For computing the four-point amplitude $\caA(\phi_1,\ldots,\phi_4)$, we need to determine $\sfm_3^\circ$, cf.~the general formula~\eqref{eq:scattering_amplitude_tree_level}. This higher product can be extracted from $\sfD^\circ$, if we know the latter's action on three input fields. The relevant terms in the recursion for $\sfD^\circ$ are
    \begin{equation}\label{eq:recursion_4pt}
        \sfP_0\circ \sfD_\rmint \circ (\sfE_0-\sfH_0\circ \sfD_\rmint\circ \sfE_0)=\sfP_0\circ \sfD_\rmint \circ (\sfE_0-\Iint\circ \sfE_0)~,
    \end{equation}
    which is a truncation of~\eqref{eq:tree_recursion} after one iteration. We represent an element of $\sfV^\circ_0\otimes \sfV^\circ_0\otimes \sfV^\circ_0$ diagrammatically in what we call an {\em HPL diagram} by three parallel lines (HPL diagrams are always read from bottom to top), 
    \begin{equation}
        \begin{tikzpicture}[baseline={([yshift=-.5ex]current bounding box.center)}]
            \matrix (m) [matrix of nodes, ampersand replacement=\&, column sep = 0.2cm, row sep = 0.2cm]{
                {} \& {} \& {} \\
                {} \& {} \& {} \\
                $\phi_1$ \& $\phi_2$ \& $\phi_3$ \\
            };
            \draw (m-1-1) -- (m-3-1) ;
            \draw (m-1-2) -- (m-3-2) ;
            \draw (m-1-3) -- (m-3-3) ;
        \end{tikzpicture}
    \end{equation}
    For interactions to take place, we have to use $\sfE_0$ to embed $\sfV^\circ_0\otimes \sfV^\circ_0\otimes \sfV^\circ_0$ in $\sfV\otimes \sfV\otimes \sfV$, but we usually do not label this embedding separately in diagrams:
    \begin{equation}
        \begin{tikzpicture}[baseline={([yshift=-.5ex]current bounding box.center)}]
            \matrix (m) [matrix of nodes, ampersand replacement=\&, column sep = 0.2cm, row sep = 0.2cm]{
                {} \& {} \& {} \\
                $\sfe$ \& $\sfe$ \& $\sfe$ \\
                $\phi_1$ \& $\phi_2$ \& $\phi_3$ \\
            };
            \draw (m-1-1) -- (m-2-1) ;
            \draw (m-1-2) -- (m-2-2) ;
            \draw (m-1-3) -- (m-2-3) ;
            \draw (m-2-1) -- (m-3-1) ;
            \draw (m-2-2) -- (m-3-2) ;
            \draw (m-2-3) -- (m-3-3) ;
        \end{tikzpicture}
        ~=~
        \begin{tikzpicture}[baseline={([yshift=-.5ex]current bounding box.center)}]
            \matrix (m) [matrix of nodes, ampersand replacement=\&, column sep = 0.2cm, row sep = 0.2cm]{
                {} \& {} \& {} \\
                {} \& {} \& {} \\
                $\phi_1$ \& $\phi_2$ \& $\phi_3$ \\
            };
            \draw (m-1-1) -- (m-3-1) ;
            \draw (m-1-2) -- (m-3-2) ;
            \draw (m-1-3) -- (m-3-3) ;
        \end{tikzpicture}
    \end{equation}
    Similarly, we do not depict the projection $\sfp$ from $\sfV_1$ down to $\sfV^\circ_1$ in any diagram.
    
    On $\sfV\otimes \sfV\otimes \sfV$, the action of $\sfD_\rmint$ agrees with $\sfM_2+\sfM_3$ and on this subspace of the tensor algebra, we can depict these operators in HPL diagrams as follows:
    \begin{equation}
        \sfM_2=
        ~\begin{tikzpicture}[baseline={([yshift=-.5ex]current bounding box.center)}]
            \matrix (m) [matrix of nodes, ampersand replacement=\&, column sep = 0.15cm, row sep = 0.2cm]{
                {} \& {} \& {} \& {} \\
                {} \& {} \& {} \& {} \\
                {} \& {} \& {} \& {} \\
            };
            \draw (m-2-2.center) -- (m-3-1) ;
            \draw[anti] (m-1-2) -- (m-2-2.center) ;
            \draw (m-2-2.center) -- (m-3-3) ;
            \draw (m-1-4) -- (m-3-4) ;
        \end{tikzpicture}
        ~+~
        \begin{tikzpicture}[baseline={([yshift=-.5ex]current bounding box.center)}]
            \matrix (m) [matrix of nodes, ampersand replacement=\&, column sep = 0.15cm, row sep = 0.2cm]{
                {} \& {} \& {} \& {} \\
                {} \& {} \& {} \& {} \\
                {} \& {} \& {} \& {} \\
            };
            \draw (m-2-3.center) -- (m-3-2) ;
            \draw[anti] (m-1-3) -- (m-2-3.center) ;
            \draw (m-2-3.center) -- (m-3-4) ;
            \draw (m-1-1) -- (m-3-1) ;
        \end{tikzpicture}
        ~~~,~~~
        \sfM_3=
        \begin{tikzpicture}[baseline={([yshift=-.5ex]current bounding box.center)}]
            \matrix (m) [matrix of nodes, ampersand replacement=\&, column sep = 0.2cm, row sep = 0.2cm]{
                {} \& {} \& {} \\
                {} \& {} \& {} \\
                {} \& {} \& {} \\
            };
            \draw (m-2-2.center) -- (m-3-1) ;
            \draw[anti] (m-1-2) -- (m-2-2.center) ;
            \draw (m-2-2.center) -- (m-3-3) ;
            \draw (m-2-2.center) -- (m-3-2) ;
        \end{tikzpicture}
        ~.
    \end{equation}
    Here, a zigzag line denotes an antifield, i.e.~an element in $\sfV_1$. While the antifield resulting from $\sfM_3$ will be paired off with a field in the final amplitude, the antifields produced by $\sfM_2$ are turned back into fields by a propagator $\sfh$ produced by
    \begin{equation}
        \sfH_0|_{\sfV\otimes \sfV}=\sfh\otimes (\sfe\circ\sfp)+\sfid\otimes \sfh~.
    \end{equation}
    In the composition $\sfH_0\circ \sfM_2$, we produce two different types of fields: the straight line stands for an element in $\sfe(\sfV_0)\subset \sfV$, and on this subspace $\sfid=\sfe\circ \sfp$ because of $\sfe\circ \sfp\circ \sfe=\sfe$ which is a consequence of~\eqref{eq:contractingBasic}. The straight line is thus simply an on-shell field, i.e.~an element of $\frF_\rmfree$. Because of $\sfh\circ \sfe=0$, however, the output of a propagator is always an interacting field, i.e.~an element of $\frF_\rmint$. It will be important to distinguish these, and in HPL diagrams, we depict fields in $\frF_\rmint$ by a dotted line:
    \begin{equation}
        \sfH_0\circ \sfM_2=
        ~\begin{tikzpicture}[baseline={([yshift=-.5ex]current bounding box.center)}]
            \matrix (m) [matrix of nodes, ampersand replacement=\&, column sep = 0.15cm, row sep = 0.2cm]{
                {} \& {} \& {} \& {} \\
                {} \& $\sfh$ \& {} \& {} \\
                {} \& {} \& {} \& {} \\
                {} \& {} \& {} \& {} \\
            };
            \draw[int] (m-1-2) -- (m-2-2) ;
            \draw (m-3-2.center) -- (m-4-1) ;
            \draw[anti] (m-2-2) -- (m-3-2.center) ;
            \draw (m-3-2.center) -- (m-4-3) ;
            \draw (m-1-4) -- (m-4-4) ;
        \end{tikzpicture}
        ~+~
        \begin{tikzpicture}[baseline={([yshift=-.5ex]current bounding box.center)}]
            \matrix (m) [matrix of nodes, ampersand replacement=\&, column sep = 0.15cm, row sep = 0.2cm]{
                {} \& {} \& {} \& {} \\
                {} \& {} \& $\sfh$ \& {} \\
                {} \& {} \& {} \& {} \\
                {} \& {} \& {} \& {} \\
            };
            \draw[int] (m-1-3) -- (m-2-3) ;
            \draw (m-3-3.center) -- (m-4-2) ;
            \draw[anti] (m-2-3) -- (m-3-3.center) ;
            \draw (m-3-3.center) -- (m-4-4) ;
            \draw (m-1-1) -- (m-4-1) ;
        \end{tikzpicture}
        ~=
        ~\begin{tikzpicture}[baseline={([yshift=-.5ex]current bounding box.center)}]
            \matrix (m) [matrix of nodes, ampersand replacement=\&, column sep = 0.15cm, row sep = 0.2cm]{
                {} \& {} \& {} \& {} \\
                {} \& {} \& {} \& {} \\
                {} \& {} \& {} \& {} \\
            };
            \draw (m-2-2.center) -- (m-3-1) ;
            \draw[int] (m-1-2) -- (m-2-2.center) ;
            \draw (m-2-2.center) -- (m-3-3) ;
            \draw (m-1-4) -- (m-3-4) ;
        \end{tikzpicture}
        ~+~
        \begin{tikzpicture}[baseline={([yshift=-.5ex]current bounding box.center)}]
            \matrix (m) [matrix of nodes, ampersand replacement=\&, column sep = 0.15cm, row sep = 0.2cm]{
                {} \& {} \& {} \& {} \\
                {} \& {} \& {} \& {} \\
                {} \& {} \& {} \& {} \\
            };
            \draw (m-2-3.center) -- (m-3-2) ;
            \draw[int] (m-1-3) -- (m-2-3.center) ;
            \draw (m-2-3.center) -- (m-3-4) ;
            \draw (m-1-1) -- (m-3-1) ;
        \end{tikzpicture}
    \end{equation}
    Together with the facts that $\sfD_\rmint$ vanishes on $\sfV$ (single fields cannot interact) and that antifields are paired off with fields in $\sfV^\circ$ by the cyclic pairing, the formula~\eqref{eq:recursion_4pt} produces HPL diagrams corresponding to three of the usual four Feynman diagrams for four-point amplitudes:
    \begin{equation}
        \begin{tikzpicture}[baseline={([yshift=-.5ex]current bounding box.center)}]
            \matrix (m) [matrix of nodes, ampersand replacement=\&, column sep = 0.15cm, row sep = 0.2cm]{
                {} \& {} \& $\phi_4$  \& {} \\
                {} \& {} \& {} \& {} \\
                {} \& {} \& {} \& {} \\
                {} \& {} \& {} \& {} \\
                $\phi_1$ \& {} \& $\phi_2$  \& $\phi_3$ \\
            };
            \draw[int] (m-2-3.center) -- (m-3-2.center) ;
            \draw[anti] (m-1-3) -- (m-2-3.center) ;
            \draw (m-2-3.center) -- (m-3-4.center) ;
            \draw (m-4-2.center) -- (m-5-1) ;
            \draw[int] (m-3-2.center) -- (m-4-2.center) ;
            \draw (m-4-2.center) -- (m-5-3) ;
            \draw (m-3-4.center) -- (m-5-4) ;
        \end{tikzpicture}
        ~+~
        \begin{tikzpicture}[baseline={([yshift=-.5ex]current bounding box.center)}]
            \matrix (m) [matrix of nodes, ampersand replacement=\&, column sep = 0.15cm, row sep = 0.2cm]{
                {} \& $\phi_4$  \& {} \& {} \\
                {} \& {} \& {} \& {} \\
                {} \& {} \& {} \& {} \\
                {} \& {} \& {} \& {} \\
                $\phi_1$  \& $\phi_2$  \& {} \& $\phi_3$  \\
            };
            \draw (m-2-2.center) -- (m-3-1.center) ;
            \draw[anti] (m-1-2) -- (m-2-2.center) ;
            \draw[int] (m-2-2.center) -- (m-3-3.center) ;
            \draw (m-4-3.center) -- (m-5-2) ;
            \draw[int] (m-3-3.center) -- (m-4-3.center) ;
            \draw (m-4-3.center) -- (m-5-4) ;
            \draw (m-3-1.center) -- (m-5-1) ;
        \end{tikzpicture}
        ~+~
        \begin{tikzpicture}[baseline={([yshift=-.5ex]current bounding box.center)}]
            \matrix (m) [matrix of nodes, ampersand replacement=\&, column sep = 0.2cm, row sep = 0.2cm]{
                {} \& $\phi_4$ \& {} \\
                {} \& {} \& {} \\
                $\phi_1$ \& $\phi_2$ \& $\phi_3$ \\
            };
            \draw (m-2-2.center) -- (m-3-1) ;
            \draw[anti] (m-1-2) -- (m-2-2.center) ;
            \draw (m-2-2.center) -- (m-3-3) ;
            \draw (m-2-2.center) -- (m-3-2) ;
        \end{tikzpicture}
    \end{equation}
    The remaining diagram (as well as copies of the above diagrams to obtain the correct total factors) are produced by the permutations of the fields $\phi_{1,2,3}$ in formula~\eqref{eq:scattering_amplitude_tree_level}.
    
    An important point to notice is that the asymmetry in the definition of $\sfH_0$ in~\eqref{eq:lift_to_tensor_algebra} between left and right sides of the propagator $\sfh$ implies cancellations. Any interaction vertex to the left of a dotted line leads to an operator $\sfe\circ \sfp$ acting on an element of $\frF_\rmint={\rm im}(\sfh)$, which vanishes. That is, any summand in the recursion~\eqref{eq:tree_recursion} that is depicted by an HPL diagram containing rows of the form 
    \begin{equation}\label{eq:vanish_diag}
        \begin{tikzpicture}[baseline={([yshift=-.5ex]current bounding box.center)}]
            \matrix (m) [matrix of nodes, ampersand replacement=\&, column sep = 0.15cm, row sep = 0.2cm]{
                {} \& {} \& {} \& {} \& {} \& {} \& {}\\
                $\sfid$ \& {} \& $\sfid$ \& {} \& $\sfh$ \& {} \& $\sfe\circ\sfp$\\
                {} \& $\cdots$ \& {} \& {} \& {} \& {} \& {}\\
                {} \& {} \& {} \& {} \& {} \& {} \& {}\\
            };
            \draw[ambi] (m-3-5.center) -- (m-4-4) ;
            \draw[int] (m-1-5) -- (m-2-5) ;
            \draw[anti] (m-2-5) -- (m-3-5.center) ;
            \draw[ambi] (m-3-5.center) -- (m-4-6) ;
            \draw[ambi] (m-2-1) -- (m-4-1) ;
            \draw[ambi] (m-2-3) -- (m-4-3) ;
            \draw[ambi] (m-1-1) -- (m-2-1) ;
            \draw[ambi] (m-1-3) -- (m-2-3) ;
            \draw (m-1-7) -- (m-2-7) ;
            \draw[int] (m-2-7) -- (m-4-7) ;
        \end{tikzpicture}
        =
        \begin{tikzpicture}[baseline={([yshift=-.5ex]current bounding box.center)}]
            \matrix (m) [matrix of nodes, ampersand replacement=\&, column sep = 0.15cm, row sep = 0.2cm]{
                {} \& {} \& {} \& {} \& {} \& {} \& {}\\
                {} \& $\cdots$ \& {} \& {} \& {} \& {} \& {}\\
                {} \& {} \& {} \& {} \& {} \& {} \& {}\\
            };
            \draw[ambi] (m-2-5.center) -- (m-3-4) ;
            \draw[int] (m-1-5) -- (m-2-5.center) ;
            \draw[ambi] (m-2-5.center) -- (m-3-6) ;
            \draw[ambi] (m-1-1) -- (m-3-1) ;
            \draw[ambi] (m-1-3) -- (m-3-3) ;
            \draw[int] (m-1-7) -- (m-3-7) ;
        \end{tikzpicture}
        ~~\rightarrow~~0
    \end{equation}
    necessarily vanishes. Here, the dashed lines represent an arbitrary element in $\frF_\rmfree\oplus \frF_\rmint$.
    
    \subsection{Loop level scattering amplitudes}
    
    After the above discussion, the transition to loop or full quantum level is rather straightforward. Note that we will completely ignore any issues related to regularization; we are merely interested in combinatorial aspects and the relation of our construction to Feynman diagrams.
    
    Recall that in the BV formalism, switching from the classical to the quantum master equation essentially amounts to replacing the BV differential by the sum of the BV differential and the BV Laplacian. We are in the dual, coalgebra picture, and we thus introduce the operator $\rmi\hbar \Delta^*$ which inserts a full pair of fields and antifields:
    \begin{equation}\label{eq:def_Delta}
        \rmi\hbar \Delta^*(\phi_1\otimes \ldots \otimes \phi_n)\coloneqq
        \sum_{i=0}^n\sum_{j=i}^n\phi_1\otimes\ldots \otimes\phi_i\otimes \phi^A\otimes \phi_{i+1}\ldots \phi_j\otimes\phi_A\otimes \phi_{j+1}\otimes \ldots\otimes\phi_n~.
    \end{equation}
    Here, $A$ is a DeWitt indices which, for scalar field theory, can be taken to be a momentum mode label combined with a label distinguishing fields from antifields. Just as for $\sfD_\rmint$, a subsequent operator $\sfH_0$ then maps the antifield back to fields. Due to the asymmetry in $\sfH$, the contribution of $\rmi\hbar \Delta^*$ is only non-vanishing, if $\phi^A$ is a field and $\phi_A$ is an antifield in~\eqref{eq:def_Delta}. For simplicity, we define
    \begin{equation}
        \sfU\coloneqq\sfH_0\circ(\rmi\hbar \Delta^*)
    \end{equation}
    and in terms of HPL diagrams, we have e.g.~
    \begin{equation}
        \begin{aligned}
            \sfU\left(
            \begin{tikzpicture}[baseline={([yshift=-.5ex]current bounding box.center)}]
                \matrix (m) [matrix of nodes, ampersand replacement=\&, column sep = 0.2cm, row sep = 0.2cm]{
                    {} \\
                    {} \\
                    $\phi_1$ \\
                };
                \draw (m-1-1) -- (m-3-1) ;
            \end{tikzpicture}
            \right)~&=~
            \begin{tikzpicture}[baseline={([yshift=-.5ex]current bounding box.center)}]
                \matrix (m) [matrix of nodes, ampersand replacement=\&, column sep = 0.2cm, row sep = 0.2cm]{
                    {} \& {} \& {} \& {} \\
                    $\sfid$ \& {} \& $\sfh$ \& $\sfe\circ\sfp$ \\
                    {} \& {} \& {} \& {} \\
                    {} \& {} \& {} \& $\phi_1$ \\
                };
                \draw[int] (m-1-1) -- (m-2-1) ;
                \draw[int] (m-2-1) -- (m-3-1.center) ;
                \draw[int] (m-3-1.center) -- (m-3-2.center) ;
                \draw[anti] (m-3-2.center) -- (m-3-3.center) ;
                \draw[int] (m-1-3) -- (m-2-3) ;
                \draw[anti] (m-2-3) -- (m-3-3.center) ;
                \draw (m-1-4) -- (m-2-4) ;
                \draw (m-2-4) -- (m-4-4) ;
            \end{tikzpicture}
            ~+~
            \begin{tikzpicture}[baseline={([yshift=-.5ex]current bounding box.center)}]
                \matrix (m) [matrix of nodes, ampersand replacement=\&, column sep = 0.2cm, row sep = 0.2cm]{
                    {} \& {} \& {} \\
                    $\sfid$ \& $\sfid$ \& $\sfh$ \\
                    {} \& {} \& {} \\
                    {} \& $\phi_1$ \& {} \\
                };
                \draw[int] (m-1-1) -- (m-2-1) ;
                \draw[int] (m-2-1) -- (m-3-1.center) ;
                \draw[int] (m-3-1.center) -- (m-3-2.center) ;
                \draw[anti] (m-3-2.center) -- (m-3-3.center) ;
                \draw[int] (m-1-3) -- (m-2-3) ;
                \draw[anti] (m-2-3) -- (m-3-3.center) ;
                \draw (m-1-2) -- (m-2-2) ;
                \draw (m-2-2) -- (m-4-2) ;
            \end{tikzpicture}
            ~+~
            \begin{tikzpicture}[baseline={([yshift=-.5ex]current bounding box.center)}]
                \matrix (m) [matrix of nodes, ampersand replacement=\&, column sep = 0.2cm, row sep = 0.2cm]{
                    {} \& {} \& {} \& {} \\
                    $\sfid$ \& $\sfid$ \& {} \& $\sfh$ \\
                    {} \& {} \& {} \& {} \\
                    $\phi_1$ \& {} \& {} \& {} \\
                };
                \draw[int] (m-1-2) -- (m-2-2) ;
                \draw[int] (m-2-2) -- (m-3-2.center) ;
                \draw[int] (m-3-2.center) -- (m-3-3.center) ;
                \draw[anti] (m-3-3.center) -- (m-3-4.center) ;
                \draw[int] (m-1-4) -- (m-2-4) ;
                \draw[anti] (m-2-4) -- (m-3-4.center) ;
                \draw (m-1-1) -- (m-2-1) ;
                \draw (m-2-1) -- (m-4-1) ;
            \end{tikzpicture}\\
            &=~
            \begin{tikzpicture}[baseline={([yshift=-.5ex]current bounding box.center)}]
                \matrix (m) [matrix of nodes, ampersand replacement=\&, column sep = 0.2cm, row sep = 0.2cm]{
                    {} \& {} \& {} \& {} \\
                    {} \& {} \& {} \& {} \\
                    {} \& {} \& {} \& {} \\
                    {} \& {} \& {} \& $\phi_1$ \\
                };
                \draw[int] (m-3-1.center) -- (m-1-1) ;
                \draw[int] (m-3-1.center) -- (m-3-3.center) ;
                \draw[int] (m-1-3) -- (m-3-3.center) ;
                \draw (m-1-4) -- (m-4-4) ;
            \end{tikzpicture}
            ~+~
            \begin{tikzpicture}[baseline={([yshift=-.5ex]current bounding box.center)}]
                \matrix (m) [matrix of nodes, ampersand replacement=\&, column sep = 0.2cm, row sep = 0.2cm]{
                    {} \& {} \& {} \\
                    {} \& {} \& {} \\
                    {} \& {} \& {} \\
                    {} \& $\phi_1$ \& {} \\
                };
                \draw[int] (m-3-1.center) -- (m-1-1) ;
                \draw[int] (m-3-1.center) -- (m-3-3.center) ;
                \draw[int] (m-1-3) -- (m-3-3.center) ;
                \draw (m-1-2) -- (m-4-2) ;
            \end{tikzpicture}
            ~+~
            \begin{tikzpicture}[baseline={([yshift=-.5ex]current bounding box.center)}]
                \matrix (m) [matrix of nodes, ampersand replacement=\&, column sep = 0.2cm, row sep = 0.2cm]{
                    {} \& {} \& {} \& {} \\
                    {} \& {} \& {} \& {} \\
                    {} \& {} \& {} \& {} \\
                    $\phi_1$ \& {} \& {} \& {} \\
                };
                \draw[int] (m-3-2.center) -- (m-1-2) ;
                \draw[int] (m-3-2.center) -- (m-3-4.center) ;
                \draw[int] (m-3-4.center) -- (m-1-4) ;
                \draw (m-1-1) -- (m-4-1) ;
            \end{tikzpicture}
        \end{aligned}
    \end{equation}
    Note that the DeWitt index $A$ in~\eqref{eq:def_Delta} can be chosen such that it splits into a field/antifield label and two labels $\vec{p}_\rmfree$ and $p_\rmint$, where $\vec{p}_\rmfree\in\IR^{d-1}$ labels a free on-shell field with corresponding $(d-1)$-momentum while $p_\rmint\in\IR^{1,d-1}$ labels an interacting field with corresponding four-momentum. The presence of the propagator $\sfh$ in $\sfU$, however, annihilates the free on-shell fields and therefore the fields produced by $\sfU$ are always in $\frF_\rmint$.
    
    The scattering amplitudes are then extracted from formula~\eqref{eq:scattering_amplitude_tree_level}, but the higher products $\mu_n$ and $\sfm_n$ are now those combining into an operator\footnote{The perturbation by the second order differential operator $\Delta^*$ implies that $\sfD^\circ$ no longer defines an ordinary homotopy algebra, but a quantum or loop homotopy algebra. Also, the other maps $\sfP$ and $\sfE$ appearing in the homological perturbation lemma are no longer ordinary coalgebra morphisms.}
    $\sfD^\circ$ computed by the recursion
    \begin{equation}\label{eq:quantum_recursion}
        \sfD^\circ=\sfP_0\circ\sfD_\rmint\circ\sfE~,~~~\sfE=\sfE_0-\Iint\!\circ\sfE-\sfU\circ\sfE~.
    \end{equation}
    Each $\sfU$ operator produces a loop, and the number of loops is therefore counted by the powers of $\hbar$.
    
    As a reasonably simple example, let us consider how the HPL produces the two-loop contribution to the 2-point amplitude. (In the quantum case, the 2-point amplitude encoded by $\sfm_1^\circ$ only vanishes to zeroth order in $\hbar$.) Specifically, let us consider the diagram
    \begin{equation}\label{diag:ex_1}
        \begin{tikzpicture}[baseline={([yshift=-.5ex]current bounding box.center)}]
            \matrix (m) [matrix of nodes, ampersand replacement=\&, column sep = 0.2cm, row sep = 0.2cm]{
                {} \& $\phi_2$ \& {} \\
                {} \& {} \& {} \\
                {} \& {} \& {} \\
                {} \& {} \& {} \\
                {} \& $\phi_1$ \& {} \\
            };
            \draw (m-1-2) -- (m-2-2.center) ;
            \draw (m-5-2) -- (m-4-2.center) ;
            \draw (m-3-1) -- (m-3-3) ;
            \draw (m-2-2.center) to[out=0,in=0, distance=0.70cm] (m-4-2.center) ;
            \draw (m-2-2.center) to[out=180,in=180, distance=0.70cm] (m-4-2.center) ;
        \end{tikzpicture}
    \end{equation}
    The recursion gives us
    \begin{equation}
        \sfD^\circ=\sfP_0\circ \sfD_\rmint\circ(\,\Iint\circ\Iint\circ\Iint\circ\sfU\circ\sfU~+~\Iint\circ\Iint\circ\sfU\circ\Iint\circ\sfU\,)+\ldots~,
    \end{equation}
    where $\ldots$ stands for terms not contributing to this Feynman diagram. We end up with eight non-vanishing HPL diagrams:
    \begin{subequations}\label{hpl_diag:ex1}
        \begin{equation}
            % 1.1
            \begin{tikzpicture}[baseline={([yshift=-.5ex]current bounding box.center)}]
                \matrix (m) [matrix of nodes, ampersand replacement=\&, column sep = 0.1cm, row sep = 0.2cm]{
                    {} \& $\phi_2$ \& {} \& {} \& {} \& {} \& \\
                    {} \& {} \& {} \& {} \& {} \& {} \& \\
                    {} \& {} \& {} \& {} \& {} \& {} \& \\
                    {} \& {} \& {} \& {} \& {} \& {} \& \\
                    {} \& {} \& {} \& {} \& {} \& {} \& \\
                    {} \& {} \& {} \& {} \& {} \& {} \& \\
                    {} \& {} \& {} \& {} \& {} \& {} \& \\
                    {} \& {} \& {} \& {} \& {} \& {} \& \\
                    {} \& {} \& {} \& {} \& {} \& {} \& \\
                    {} \& {} \& {} \& {} \& {} \& {} \& \\
                    {} \& {} \& {} \& {} \& {} \& {} \& \\
                    {} \& {} \& {} \& {} \& {} \& {} $\phi_1$ \\
                };
                \draw[anti] (m-1-2) -- (m-2-2.center) ;
                \draw[int] (m-2-2.center) -- (m-3-1) ;
                \draw[int] (m-2-2.center) -- (m-3-3) ;
                \draw[int] (m-3-3) -- (m-4-3.center);
                \draw[int] (m-4-3.center) -- (m-5-2) ;
                \draw[int] (m-4-3.center) -- (m-5-4) ;
                \draw[int] (m-5-4) -- (m-6-4.center);
                \draw[int] (m-6-4.center) -- (m-7-3) ;
                \draw[int] (m-6-4.center) -- (m-7-5) ;
                \draw[int] (m-3-1) -- (m-5-1);
                \draw[int] (m-5-1) -- (m-7-1);
                \draw[int] (m-7-1) -- (m-9-1);
                \draw[int] (m-5-2) -- (m-7-2);
                \draw[int] (m-7-2) -- (m-9-2);
                \draw[int] (m-7-3) -- (m-9-3);
                \draw[int] (m-7-5) -- (m-8-5.center);
                \draw[int] (m-8-5.center) -- (m-9-4) ;
                \draw (m-8-5.center) -- (m-9-6) ;
                \draw[int] (m-9-1) -- (m-11-1.center);
                \draw[int] (m-9-3) -- (m-11-3.center);
                \draw[int] (m-11-1.center) -- (m-11-3.center);
                \draw[int] (m-9-2) -- (m-10-2.center);
                \draw[int] (m-9-4) -- (m-10-4.center);
                \draw[int] (m-10-2.center) -- (m-10-4.center);
                \draw (m-9-6) -- (m-12-6);
            \end{tikzpicture}
            ~~~
            % 1.2
            \begin{tikzpicture}[baseline={([yshift=-.5ex]current bounding box.center)}]
                \matrix (m) [matrix of nodes, ampersand replacement=\&, column sep = 0.1cm, row sep = 0.2cm]{
                    {} \& $\phi_2$ \& {} \& {} \& {} \& {} \& \\
                    {} \& {} \& {} \& {} \& {} \& {} \& \\
                    {} \& {} \& {} \& {} \& {} \& {} \& \\
                    {} \& {} \& {} \& {} \& {} \& {} \& \\
                    {} \& {} \& {} \& {} \& {} \& {} \& \\
                    {} \& {} \& {} \& {} \& {} \& {} \& \\
                    {} \& {} \& {} \& {} \& {} \& {} \& \\
                    {} \& {} \& {} \& {} \& {} \& {} \& \\
                    {} \& {} \& {} \& {} \& {} \& {} \& \\
                    {} \& {} \& {} \& {} \& {} \& {} \& \\
                    {} \& {} \& {} \& {} \& {} \& {} \& \\
                    {} \& {} \& {} \& $\phi_1$ \& {} \& {} \\
                };
                \draw[anti] (m-1-2) -- (m-2-2.center) ;
                \draw[int] (m-2-2.center) -- (m-3-1) ;
                \draw[int] (m-2-2.center) -- (m-3-3) ;
                \draw[int] (m-3-3) -- (m-4-3.center);
                \draw[int] (m-4-3.center) -- (m-5-2) ;
                \draw[int] (m-4-3.center) -- (m-5-4) ;
                \draw[int] (m-5-4) -- (m-6-4.center);
                \draw[int] (m-6-4.center) -- (m-7-3) ;
                \draw[int] (m-6-4.center) -- (m-7-5) ;
                \draw[int] (m-3-1) -- (m-5-1);
                \draw[int] (m-5-1) -- (m-7-1);
                \draw[int] (m-7-1) -- (m-9-1);
                \draw[int] (m-5-2) -- (m-7-2);
                \draw[int] (m-7-2) -- (m-9-2);
                \draw[int] (m-7-3) -- (m-9-3);
                \draw[int] (m-7-5) -- (m-8-5.center);
                \draw (m-8-5.center) -- (m-9-4) ;
                \draw[int] (m-8-5.center) -- (m-9-6) ;
                \draw[int] (m-9-1) -- (m-11-1.center);
                \draw[int] (m-9-3) -- (m-11-3.center);
                \draw[int] (m-11-1.center) -- (m-11-3.center);
                \draw[int] (m-9-2) -- (m-10-2.center);
                \draw[int] (m-9-6) -- (m-10-6.center);
                \draw[int] (m-10-2.center) -- (m-10-6.center);
                \draw (m-9-4) -- (m-12-4);
            \end{tikzpicture}
            ~~~
            % 1.3
            \begin{tikzpicture}[baseline={([yshift=-.5ex]current bounding box.center)}]
                \matrix (m) [matrix of nodes, ampersand replacement=\&, column sep = 0.1cm, row sep = 0.2cm]{
                    {} \& $\phi_2$ \& {} \& {} \& {} \& {} \& \\
                    {} \& {} \& {} \& {} \& {} \& {} \& \\
                    {} \& {} \& {} \& {} \& {} \& {} \& \\
                    {} \& {} \& {} \& {} \& {} \& {} \& \\
                    {} \& {} \& {} \& {} \& {} \& {} \& \\
                    {} \& {} \& {} \& {} \& {} \& {} \& \\
                    {} \& {} \& {} \& {} \& {} \& {} \& \\
                    {} \& {} \& {} \& {} \& {} \& {} \& \\
                    {} \& {} \& {} \& {} \& {} \& {} \& \\
                    {} \& {} \& {} \& {} \& {} \& {} \& \\
                    {} \& {} \& {} \& {} \& {} \& {} \& \\
                    {} \& {} \& {} \& {} \& {} $\phi_1$ \& {} \\
                };
                \draw[anti] (m-1-2) -- (m-2-2.center) ;
                \draw[int] (m-2-2.center) -- (m-3-1) ;
                \draw[int] (m-2-2.center) -- (m-3-3) ;
                \draw[int] (m-3-3) -- (m-4-3.center);
                \draw[int] (m-4-3.center) -- (m-5-2) ;
                \draw[int] (m-4-3.center) -- (m-5-5) ;
                \draw[int] (m-5-5) -- (m-6-5.center);
                \draw[int] (m-6-5.center) -- (m-7-4) ;
                \draw[int] (m-6-5.center) -- (m-7-6) ;
                \draw[int] (m-3-1) -- (m-5-1);
                \draw[int] (m-5-1) -- (m-7-1);
                \draw[int] (m-5-2) -- (m-7-2);
                \draw[int] (m-7-2) -- (m-10-2);
                \draw[int] (m-7-4) -- (m-9-4.center);
                \draw[int] (m-9-4.center) -- (m-10-3) ;
                \draw (m-9-4.center) -- (m-10-5) ;
                \draw[int] (m-10-2) -- (m-11-2.center);
                \draw[int] (m-10-3) -- (m-11-3.center);
                \draw[int] (m-11-2.center) -- (m-11-3.center);
                \draw[int] (m-7-1) -- (m-8-1.center);
                \draw[int] (m-7-6) -- (m-8-6.center);
                \draw[int] (m-8-1.center) -- (m-8-6.center);
                \draw (m-10-5) -- (m-12-5);
            \end{tikzpicture}
            ~~~
            % 1.4
            \begin{tikzpicture}[baseline={([yshift=-.5ex]current bounding box.center)}]
                \matrix (m) [matrix of nodes, ampersand replacement=\&, column sep = 0.1cm, row sep = 0.2cm]{
                    {} \& $\phi_2$ \& {} \& {} \& {} \& {} \& \\
                    {} \& {} \& {} \& {} \& {} \& {} \& \\
                    {} \& {} \& {} \& {} \& {} \& {} \& \\
                    {} \& {} \& {} \& {} \& {} \& {} \& \\
                    {} \& {} \& {} \& {} \& {} \& {} \& \\
                    {} \& {} \& {} \& {} \& {} \& {} \& \\
                    {} \& {} \& {} \& {} \& {} \& {} \& \\
                    {} \& {} \& {} \& {} \& {} \& {} \& \\
                    {} \& {} \& {} \& {} \& {} \& {} \& \\
                    {} \& {} \& {} \& {} \& {} \& {} \& \\
                    {} \& {} \& {} \& {} \& {} \& {} \& \\
                    {} \& {} \& {}  $\phi_1$ \& {} \& {}\& {} \\
                };
                \draw[anti] (m-1-2) -- (m-2-2.center) ;
                \draw[int] (m-2-2.center) -- (m-3-1) ;
                \draw[int] (m-2-2.center) -- (m-3-3) ;
                \draw[int] (m-3-3) -- (m-4-3.center);
                \draw[int] (m-4-3.center) -- (m-5-2) ;
                \draw[int] (m-4-3.center) -- (m-5-5) ;
                \draw[int] (m-5-5) -- (m-6-5.center);
                \draw[int] (m-6-5.center) -- (m-7-4) ;
                \draw[int] (m-6-5.center) -- (m-7-6) ;
                \draw[int] (m-3-1) -- (m-5-1);
                \draw[int] (m-5-1) -- (m-7-1);
                \draw[int] (m-5-2) -- (m-7-2);
                \draw[int] (m-7-2) -- (m-10-2);
                \draw[int] (m-7-4) -- (m-9-4.center);
                \draw (m-9-4.center) -- (m-10-3) ;
                \draw[int] (m-9-4.center) -- (m-10-5) ;
                \draw[int] (m-10-2) -- (m-11-2.center);
                \draw[int] (m-10-5) -- (m-11-5.center);
                \draw[int] (m-11-2.center) -- (m-11-5.center);
                \draw[int] (m-7-1) -- (m-8-1.center);
                \draw[int] (m-7-6) -- (m-8-6.center);
                \draw[int] (m-8-1.center) -- (m-8-6.center);
                \draw (m-10-3) -- (m-12-3);
            \end{tikzpicture}
        \end{equation}
        \begin{equation}
            \\
            % 2.1
            \begin{tikzpicture}[baseline={([yshift=-.5ex]current bounding box.center)}]
                \matrix (m) [matrix of nodes, ampersand replacement=\&, column sep = 0.1cm, row sep = 0.2cm]{
                    {} \& $\phi_2$ \& {} \& {} \& {} \& {} \& \\
                    {} \& {} \& {} \& {} \& {} \& {} \& \\
                    {} \& {} \& {} \& {} \& {} \& {} \& \\
                    {} \& {} \& {} \& {} \& {} \& {} \& \\
                    {} \& {} \& {} \& {} \& {} \& {} \& \\
                    {} \& {} \& {} \& {} \& {} \& {} \& \\
                    {} \& {} \& {} \& {} \& {} \& {} \& \\
                    {} \& {} \& {} \& {} \& {} \& {} \& \\
                    {} \& {} \& {} \& {} \& {} \& {} \& \\
                    {} \& {} \& {} \& {} \& {} \& {} \& \\
                    {} \& {} \& {} \& {} \& {} \& {} \& \\
                    {} \& {} \& {} \& {} \& {} \& {} $\phi_1$ \\
                };
                \draw[anti] (m-1-2) -- (m-2-2.center) ;
                \draw[int] (m-2-2.center) -- (m-3-1) ;
                \draw[int] (m-2-2.center) -- (m-3-3) ;
                \draw[int] (m-3-3) -- (m-4-3.center);
                \draw[int] (m-4-3.center) -- (m-5-2) ;
                \draw[int] (m-4-3.center) -- (m-5-5) ;
                \draw[int] (m-5-5) -- (m-6-5.center);
                \draw[int] (m-6-5.center) -- (m-7-4) ;
                \draw (m-6-5.center) -- (m-7-6) ;
                \draw[int] (m-3-1) -- (m-5-1);
                \draw[int] (m-5-1) -- (m-7-1);
                \draw[int] (m-7-1) -- (m-9-1);
                \draw[int] (m-5-2) -- (m-7-2);
                \draw[int] (m-7-2) -- (m-9-2);
                \draw[int] (m-7-4) -- (m-8-4.center);
                \draw[int] (m-8-4.center) -- (m-9-3) ;
                \draw[int] (m-8-4.center) -- (m-9-5) ;
                \draw[int] (m-9-1) -- (m-11-1.center);
                \draw[int] (m-9-3) -- (m-11-3.center);
                \draw[int] (m-11-1.center) -- (m-11-3.center);
                \draw[int] (m-9-2) -- (m-10-2.center);
                \draw[int] (m-9-5) -- (m-10-5.center);
                \draw[int] (m-10-2.center) -- (m-10-5.center);
                \draw (m-7-6) -- (m-9-6);
                \draw (m-9-6) -- (m-12-6);
            \end{tikzpicture}
            ~~~
            % 2.2
            \begin{tikzpicture}[baseline={([yshift=-.5ex]current bounding box.center)}]
                \matrix (m) [matrix of nodes, ampersand replacement=\&, column sep = 0.1cm, row sep = 0.2cm]{
                    {} \& $\phi_2$ \& {} \& {} \& {} \& {} \& \\
                    {} \& {} \& {} \& {} \& {} \& {} \& \\
                    {} \& {} \& {} \& {} \& {} \& {} \& \\
                    {} \& {} \& {} \& {} \& {} \& {} \& \\
                    {} \& {} \& {} \& {} \& {} \& {} \& \\
                    {} \& {} \& {} \& {} \& {} \& {} \& \\
                    {} \& {} \& {} \& {} \& {} \& {} \& \\
                    {} \& {} \& {} \& {} \& {} \& {} \& \\
                    {} \& {} \& {} \& {} \& {} \& {} \& \\
                    {} \& {} \& {} \& {} \& {} \& {} \& \\
                    {} \& {} \& {} \& {} \& {} \& {} \& \\
                    {} \& {} \& {} \& {} \& {} \& {} $\phi_1$ \\
                };
                \draw[anti] (m-1-2) -- (m-2-2.center) ;
                \draw[int] (m-2-2.center) -- (m-3-1) ;
                \draw[int] (m-2-2.center) -- (m-3-3) ;
                \draw[int] (m-3-3) -- (m-4-3.center);
                \draw[int] (m-4-3.center) -- (m-5-2) ;
                \draw[int] (m-4-3.center) -- (m-5-5) ;
                \draw[int] (m-5-5) -- (m-6-5.center);
                \draw[int] (m-6-5.center) -- (m-7-4) ;
                \draw (m-6-5.center) -- (m-7-6) ;
                \draw[int] (m-3-1) -- (m-5-1);
                \draw[int] (m-5-1) -- (m-7-1);
                \draw[int] (m-7-1) -- (m-9-1);
                \draw[int] (m-5-2) -- (m-7-2);
                \draw[int] (m-7-2) -- (m-9-2);
                \draw[int] (m-7-4) -- (m-8-4.center);
                \draw[int] (m-8-4.center) -- (m-9-3) ;
                \draw[int] (m-8-4.center) -- (m-9-5) ;
                \draw[int] (m-9-2) -- (m-11-2.center);
                \draw[int] (m-9-3) -- (m-11-3.center);
                \draw[int] (m-11-2.center) -- (m-11-3.center);
                \draw[int] (m-9-1) -- (m-10-1.center);
                \draw[int] (m-9-5) -- (m-10-5.center);
                \draw[int] (m-10-1.center) -- (m-10-5.center);
                \draw (m-7-6) -- (m-9-6);
                \draw (m-9-6) -- (m-12-6);
            \end{tikzpicture}
            ~~~
            % 2.3
            \begin{tikzpicture}[baseline={([yshift=-.5ex]current bounding box.center)}]
                \matrix (m) [matrix of nodes, ampersand replacement=\&, column sep = 0.1cm, row sep = 0.2cm]{
                    {} \& $\phi_2$ \& {} \& {} \& {} \& {} \& \\
                    {} \& {} \& {} \& {} \& {} \& {} \& \\
                    {} \& {} \& {} \& {} \& {} \& {} \& \\
                    {} \& {} \& {} \& {} \& {} \& {} \& \\
                    {} \& {} \& {} \& {} \& {} \& {} \& \\
                    {} \& {} \& {} \& {} \& {} \& {} \& \\
                    {} \& {} \& {} \& {} \& {} \& {} \& \\
                    {} \& {} \& {} \& {} \& {} \& {} \& \\
                    {} \& {} \& {} \& {} \& {} \& {} \& \\
                    {} \& {} \& {} \& {} \& {} \& {} \& \\
                    {} \& {} \& {} \& {} \& {} \& {} \& \\
                    {} \& {} \& $\phi_1$ \& {} \& {} \& {} \& {} \\
                };
                \draw[anti] (m-1-2) -- (m-2-2.center) ;
                \draw[int] (m-2-2.center) -- (m-3-1) ;
                \draw[int] (m-2-2.center) -- (m-3-3) ;
                \draw[int] (m-3-3) -- (m-4-3.center);
                \draw[int] (m-4-3.center) -- (m-5-2) ;
                \draw[int] (m-4-3.center) -- (m-5-4) ;
                \draw[int] (m-5-4) -- (m-6-4.center);
                \draw (m-6-4.center) -- (m-7-3) ;
                \draw[int] (m-6-4.center) -- (m-7-5) ;
                \draw[int] (m-3-1) -- (m-5-1);
                \draw[int] (m-5-1) -- (m-7-1);
                \draw[int] (m-7-1) -- (m-9-1);
                \draw[int] (m-5-2) -- (m-7-2);
                \draw[int] (m-7-2) -- (m-9-2);
                \draw[int] (m-7-5) -- (m-8-5.center);
                \draw[int] (m-8-5.center) -- (m-9-4) ;
                \draw[int] (m-8-5.center) -- (m-9-6) ;
                \draw[int] (m-9-1) -- (m-11-1.center);
                \draw[int] (m-9-4) -- (m-11-4.center);
                \draw[int] (m-11-1.center) -- (m-11-4.center);
                \draw[int] (m-9-2) -- (m-10-2.center);
                \draw[int] (m-9-6) -- (m-10-6.center);
                \draw[int] (m-10-2.center) -- (m-10-6.center);
                \draw (m-7-3) -- (m-9-3);
                \draw (m-9-3) -- (m-12-3);
            \end{tikzpicture}
            ~~~
            % 2.4
            \begin{tikzpicture}[baseline={([yshift=-.5ex]current bounding box.center)}]
                \matrix (m) [matrix of nodes, ampersand replacement=\&, column sep = 0.1cm, row sep = 0.2cm]{
                    {} \& $\phi_2$ \& {} \& {} \& {} \& {} \& \\
                    {} \& {} \& {} \& {} \& {} \& {} \& \\
                    {} \& {} \& {} \& {} \& {} \& {} \& \\
                    {} \& {} \& {} \& {} \& {} \& {} \& \\
                    {} \& {} \& {} \& {} \& {} \& {} \& \\
                    {} \& {} \& {} \& {} \& {} \& {} \& \\
                    {} \& {} \& {} \& {} \& {} \& {} \& \\
                    {} \& {} \& {} \& {} \& {} \& {} \& \\
                    {} \& {} \& {} \& {} \& {} \& {} \& \\
                    {} \& {} \& {} \& {} \& {} \& {} \& \\
                    {} \& {} \& {} \& {} \& {} \& {} \& \\
                    {} \& {} \& $\phi_1$ \& {} \& {} \& {} \& {} \\
                };
                \draw[anti] (m-1-2) -- (m-2-2.center) ;
                \draw[int] (m-2-2.center) -- (m-3-1) ;
                \draw[int] (m-2-2.center) -- (m-3-3) ;
                \draw[int] (m-3-3) -- (m-4-3.center);
                \draw[int] (m-4-3.center) -- (m-5-2) ;
                \draw[int] (m-4-3.center) -- (m-5-4) ;
                \draw[int] (m-5-4) -- (m-6-4.center);
                \draw (m-6-4.center) -- (m-7-3) ;
                \draw[int] (m-6-4.center) -- (m-7-5) ;
                \draw[int] (m-3-1) -- (m-5-1);
                \draw[int] (m-5-1) -- (m-7-1);
                \draw[int] (m-7-1) -- (m-9-1);
                \draw[int] (m-5-2) -- (m-7-2);
                \draw[int] (m-7-2) -- (m-9-2);
                \draw[int] (m-7-5) -- (m-8-5.center);
                \draw[int] (m-8-5.center) -- (m-9-4) ;
                \draw[int] (m-8-5.center) -- (m-9-6) ;
                \draw[int] (m-9-2) -- (m-11-2.center);
                \draw[int] (m-9-4) -- (m-11-4.center);
                \draw[int] (m-11-2.center) -- (m-11-4.center);
                \draw[int] (m-9-1) -- (m-10-1.center);
                \draw[int] (m-9-6) -- (m-10-6.center);
                \draw[int] (m-10-1.center) -- (m-10-6.center);
                \draw (m-7-3) -- (m-9-3);
                \draw (m-9-3) -- (m-12-3);
            \end{tikzpicture}
        \end{equation}
    \end{subequations}
    where we did not connect lines to clearly indicate the action of the operators $\Iint$ and $\sfU$. All other HPL diagrams are either of different topology to~\eqref{diag:ex_1} or vanish, mostly due to an operator $\sfe\circ \sfp$ hitting a dotted line.

    \section{Symmetry factors of Feynman diagrams}
    
    \subsection{Generating functional}\label{ssec:gen_func}
    
    Recall the generating functional for connected $n$-point correlation functions in real scalar field theory,
    \begin{equation}\label{eq:generating_function}
        W[J]\coloneqq  \rme^{\rmi \int \rmd^d w\caL_{\rm int}[\frac{1}{\rmi}\delder{J(w)}]}~\rme^{\frac{\rmi}{2}\int \rmd^d y\rmd^d z J(y)G_2(y-z)J(z)}~,
    \end{equation}
    from which we extract the correlation functions
    \begin{equation}\label{eq:correlators}
        \langle \phi(x_1)\ldots \phi(x_{n})\rangle=\left.\delder{J(x_1)}\ldots \delder{J(x_{n})}~W[J]\right|_{J=0}
    \end{equation}
    after Taylor expanding the exponentials in $W[J]$. We normalize the coupling constants $c_k$ in the interaction Lagrangian as 
    \begin{equation}
        \caL_{\rm int}[\phi(x)]=\sum_{k\geq 3} \frac{1}{k!}c_k \phi^k(x)~,
    \end{equation}
    cf.~\eqref{eq:action}, which will allow us to extract the symmetry factor $\Sigma_\Gamma$ of a Feynman diagram $\Gamma$ in the straightforward, usual manner presented in quantum field theory textbooks.
    
    We note that even though the Feynman diagrams contributing to scattering amplitudes have amputated external legs, we will always consider un-amputated diagrams to compute the symmetry factors from the generating functional for the corresponding amputated ones.
    
    For a Feynman diagram $\Gamma$ with $v_i$ $i$-ary vertices and $p$ propagators (no amputations of external legs), the Taylor expansion leads to the following denominator $D_\Gamma$ of the symmetry factor:
    \begin{equation}
        \frac{1}{D_\Gamma}=\frac{1}{p!}\left(\frac{1}{2}\right)^p\prod_i \frac{1}{v_i!}\left(\frac{1}{i!}\right)^{v_i}~.
    \end{equation}
    The numerator $N_\Gamma$ now counts the number of distinct actions of the functional derivatives on the Taylor-expanded exponentials to produce the Feynman diagram under consideration. As is well known, the total symmetry factor is therefore the quotient by the automorphism group $\mathsf{Aut}(\Gamma)$ of the Feynman diagram:
    \begin{equation}
        \Sigma_\Gamma=\frac{N_\Gamma}{D_\Gamma}=\frac{1}{|\mathsf{Aut}(\Gamma)|}~.
    \end{equation}
    
    A general and convenient formula for $\Sigma_\Gamma$ for general field theories with cubic and quartic potentials has been given in~\cite{Palmer:2001vq}. We now extend it to the general real scalar field theory with arbitrary polynomial potential. Non-trivial symmetry factors arise as follows:
    \begin{enumerate}
        \item A factor of $\frac{1}{\pi_v}$, where $\pi_v$ is the number of permutations of internal vertices so that all propagators remain connected to the same propagators. For example, the two vertices in the middle of the diagram~\eqref{diag:ex_1} can be permuted and thus $\pi_v=2!$.
        \item A factor of $\frac{1}{\ell_i! 2^{\ell_i}}$ for any $\ell_i$ propagators connecting to the same vertex, i.e.~subdiagrams of the form
        \begin{equation}\label{diag:self_loops}
            \begin{tikzpicture}[baseline={([yshift=-.5ex]current bounding box.center)}]
                \matrix (m) [matrix of nodes, ampersand replacement=\&, column sep = 0.2cm, row sep = 0.2cm]{
                    {} \& $\vdots$ \& {} \\
                    {} \& {} \& {} \\
                    {} \& $\ldots$ \& {} \\
                    {} \& {} \& {} \\
                };
                \draw (m-3-1) -- (m-2-2.center) ;
                \draw (m-3-3) -- (m-2-2.center) ;
                \draw (m-2-2.center) arc(-90:270:0.15);
                \draw (m-2-2.center) arc(-90:270:0.65);
                \draw (m-2-2.center) arc(-90:270:0.55);
            \end{tikzpicture}
        \end{equation}
        This is easily seen: the numerator from the Taylor expansion for $k$ external legs is
        \begin{equation}
            N_\Gamma=\frac{1}{(k+\ell_i)!}\frac{1}{2^{k+\ell_i}}\frac{1}{(k+2\ell_i)!}~.
        \end{equation}
        We have $\frac{(k+\ell_i)!}{\ell_i!}2^k$ ways of connecting the external vertices to propagators. There are then $\binom{k+2\ell_i}{k}k!$ ways of connecting the vertex to these propagators. It remains to connect the $2\ell_i$ derivatives left over from the vertex with the $2\ell_i$ fields of the remaining $\ell_i$ propagators, which can be done in $2\ell_i!$ ways.      Putting everything together, we are left with the given symmetry factor.
        \item A factor of $\frac{1}{n_j!}$ for any $n_j$ propagators between the same two distinct vertices, i.e.~subdiagrams of the form
        \begin{equation}\label{diag:2-loops}
            \begin{tikzpicture}[baseline={([yshift=-.5ex]current bounding box.center)}]
                \matrix (m) [matrix of nodes, ampersand replacement=\&, column sep = 0.2cm, row sep = 0.2cm]{
                    {} \& $\ldots$ \& {} \\
                    {} \& {} \& {} \\
                    {} \& $~\ldots$ \& {} \\
                    {} \& {} \& {} \\
                    {} \& $\ldots$ \& {} \\
                };
                \draw (m-1-1) -- (m-2-2.center) ;
                \draw (m-1-3) -- (m-2-2.center) ;
                \draw (m-5-1) -- (m-4-2.center) ;
                \draw (m-5-3) -- (m-4-2.center) ;
                \draw (m-2-2.center) to[out=0,in=0, distance=0.70cm] (m-4-2.center) ;
                \draw (m-2-2.center) to[out=180,in=180, distance=0.50cm] (m-4-2.center) ;
                \draw (m-2-2.center) to[out=180,in=180, distance=0.70cm] (m-4-2.center) ;
            \end{tikzpicture}
        \end{equation}
        Let us assume that there are $k$ external legs and two vertices of degree~$d_1$ and $d_2=k+2\ell-d_1$. We further assume that $d_1\neq d_2$ so that the vertices are distinguishable. Otherwise, there is an additional factor of $\tfrac12$ in the numerator which is compensated by a factor of~$2$ coming from permuting the two vertices. The numerator is then 
        \begin{equation}
            N_\Gamma=\frac{1}{(k+n_j)!}\frac{1}{2^{k+n_j}}\frac{1}{d_1!}\frac{1}{d_2!}~.
        \end{equation}
        There are again $\frac{(k+n_j)!}{n_j!}2^{k}$ ways of connecting the external vertices to propagators. Note that $k_1=d_1-n_j$ and $k_2=d_2-n_j$ of the external legs are connected to vertex~1 and~2, respectively. There are $\binom{d_1}{k_1}k_1!\binom{d_2}{k_2}k_2!$ ways of doing this. Finally, we can connect the first vertex with the remaining $n_j$ propagators, which can be done in $n_j!2^{n_j}$ possible ways, and the second vertex to them which is done in $n_j!$ ways. We are left with a total symmetry factor of $\frac{1}{n_j!}$.
    \end{enumerate}
    Altogether, we have the following formula:
    \begin{equation}\label{eq:formula_Sigma}
        \Sigma_\Gamma=\frac{1}{\pi_v}\left(\prod_i\frac{1}{\ell_i! 2^{\ell_i}}\right)\left(\prod_j \frac{1}{n_j!}\right)~.
    \end{equation}
    
    As a corollary, we have the familiar textbook statement:
    \begin{corollary}
        All diagrams $\Gamma$ contributing to tree level amplitudes have a symmetry factor of $\Sigma_\Gamma=1$.
    \end{corollary}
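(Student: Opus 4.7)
The plan is to apply formula~\eqref{eq:formula_Sigma} directly and verify that every factor on the right-hand side equals $1$ for any tree-level diagram $\Gamma$ with distinguishable external fields $\phi_1,\dots,\phi_{n+1}$.

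First, I would dispense with the two product factors using only the defining property of a tree (no cycles). A subdiagram of the shape~\eqref{diag:self_loops} is a loop attached at a single vertex, so a tree cannot contain one and hence $\ell_i=0$ for every internal vertex, giving $\prod_i \tfrac{1}{\ell_i!\,2^{\ell_i}}=1$. Likewise, two or more propagators between the same pair of distinct vertices as in~\eqref{diag:2-loops} would close a cycle, so $n_j\le 1$ for every such pair, giving $\prod_j\tfrac{1}{n_j!}=1$.

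The remaining step, and the one requiring a small graph-theoretic argument, is $\pi_v=1$. I would argue as follows: in a tree with labelled leaves (the distinguishable external legs $\phi_1,\dots,\phi_{n+1}$), removing any internal vertex $v$ disconnects the diagram into connected components, each of which contains a distinguished subset of external-leg labels. This assignment of a partition of $\{1,\dots,n+1\}$ to each internal vertex determines the vertex uniquely: two internal vertices with identical partitions would have to coincide by the standard reconstruction of a labelled tree from its leaf data. A permutation of internal vertices that preserves all propagator incidences must preserve this partition assignment, and therefore must be the identity. Hence $\pi_v=1$.

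Combining these three observations with~\eqref{eq:formula_Sigma} yields $\Sigma_\Gamma=1$. The only point where care is needed is the reconstruction argument for $\pi_v=1$; the vanishing of the self-loop and multi-edge contributions is immediate from the acyclicity of trees. One could alternatively phrase the whole statement as $|\mathsf{Aut}(\Gamma)|=1$ for a labelled-leaf tree, which is a classical fact, but I prefer to extract the conclusion cleanly from the master formula~\eqref{eq:formula_Sigma} since that is the framework used throughout the section.
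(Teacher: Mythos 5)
Your proof is correct and follows essentially the same route as the paper's: both reduce the claim to the triviality of the automorphism group of a tree with labelled external legs (i.e.\ $\pi_v=1$, with the self-loop and multi-edge factors absent by acyclicity), the paper stating this as ``all vertices and propagators are distinguished by their position relative to the labeled output legs'' while you make it precise via the leaf-partition reconstruction argument. The only point worth noting is that your claim that the leaf partition determines the internal vertex uniquely uses that every interaction vertex has degree at least $3$ (it can fail for degree-$2$ vertices on a path), which holds here since $k\geq 3$ in the potential.
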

    \begin{proof}
        All vertices and propagators are distinguished by their position relative to the labeled output legs. Therefore, there are no two different permutations of vertices and propagators leading to the same diagram. This means $\pi_v=1$ and all the factors in the Taylor expansion of the generating function~\eqref{eq:generating_function} are absorbed. More abstractly, the automorphism group of the Feynman diagram is trivial. 
    \end{proof}
    
    It will be useful to have a concrete algorithm for evaluating the contributions of the correlators~\eqref{eq:correlators} to a concrete Feynman diagram.
    \begin{algorithm}\label{alg:main}
        To evaluate a given Feynman diagram contributing to the correlators~\eqref{eq:correlators}, we proceed as follows:
        \begin{enumerate}
            \item We start from the functional derivative with respect to $J(x_{n+1})$, connect it to a propagator and to a suitable vertex. We call the external leg corresponding to $\phi(x_{n+1})$ the {\em origin} of our graph.
            \item We pair off any legs leading to external legs in the vertices  connected so far to the origin.
            \item We close off all loops between vertices connected so far to the origin.
            \item If there are no vertices left, we are done. Otherwise, we take the vertex with remaining unpaired functional derivatives which we added last. We then pair this vertex with a propagator and a remaining vertex. (Note that all legs leading to loops and exernal legs have already been paired off.) We then continue with 2.
        \end{enumerate}
    \end{algorithm}
    
    Because any (un-amputated) Feynman diagram contributes to a correlator, it is clear that the above algorithm will be suitable for any Feynman diagram.
    
    \subsection{Homological perturbation lemma: examples}
    
    We want to show that the homological perturbation lemma reproduces the combinatorics of scattering amplitudes. That is, each Feynman diagram $\Gamma$ is produced by the homological perturbation lemma with the appropriate symmetry factor $\Sigma_\Gamma$. Before stating the general proof, let us discuss a few illustrative examples.
    
    As a first, non-trivial example, consider again the diagram~\eqref{diag:ex_1}, which has a symmetry factor of $\Sigma_\Gamma=\tfrac12$. The HPL reproduces this diagram in eight possible ways, cf.~the HPL-diagrams~\eqref{hpl_diag:ex1}. Each of the four vertices comes with a factor of\footnote{We shall always drop the coupling constants $c_k$ when talking about the factors; they are evidently reproduced in the correct way.} $\tfrac12$, and the total factor we obtain is thus $8\frac{1}{2^4}$, as expected.
    
    As a second example, we consider the diagram
    \begin{equation}\label{diag:4pt1lp}
        \begin{tikzpicture}[baseline={([yshift=-.5ex]current bounding box.center)}]
            \matrix (m) [matrix of nodes, ampersand replacement=\&, column sep = 0.2cm, row sep = 0.2cm]{
                {} \& $4$ \& {} \\
                {} \& {} \& {} \\
                {} \& {} \& {} \\
                {} \& {} \& {} \\
                $1$ \& $2$ \& $3$ \\
            };
            \draw (m-1-2) -- (m-2-2.center) ;
            \draw (m-5-2) -- (m-4-2.center) ;
            \draw (m-2-2.center) to[out=0,in=0, distance=0.69cm] node[midway,inner sep=0pt,outer sep=0pt](NP1) {} (m-4-2.center) ;
            \draw (m-2-2.center) to[out=180,in=180, distance=0.69cm] node[midway,inner sep=0pt,outer sep=0pt](NP2) {} (m-4-2.center);
            \draw (NP2) -- (m-5-1) ;
            \draw (NP1) -- (m-5-3) ;
        \end{tikzpicture}
        ~~=~~
        \begin{tikzpicture}[baseline={([yshift=-.5ex]current bounding box.center)}]
            \matrix (m) [matrix of nodes, ampersand replacement=\&, column sep = 0.2cm, row sep = 0.2cm]{
                {} \& $4$ \& {} \\
                {} \& {} \& {} \\
                {} \& {} \& {} \\
                {} \& {} \& {} \\
                $3$ \& $2$ \& $1$ \\
            };
            \draw (m-1-2) -- (m-2-2.center) ;
            \draw (m-5-2) -- (m-4-2.center) ;
            \draw (m-2-2.center) to[out=0,in=0, distance=0.69cm] node[midway,inner sep=0pt,outer sep=0pt](NP1) {} (m-4-2.center) ;
            \draw (m-2-2.center) to[out=180,in=180, distance=0.69cm] node[midway,inner sep=0pt,outer sep=0pt](NP2) {} (m-4-2.center);
            \draw (NP2) -- (m-5-1) ;
            \draw (NP1) -- (m-5-3) ;
        \end{tikzpicture}
    \end{equation}
    which comes with a symmetry factor of $\Sigma_\Gamma=1$. Since it has again four vertices contributing a factor of $(\tfrac12)^4$, the HPL needs to produce this diagram in 16 distinct ways. Consider some of the realizations of this diagram by the homological perturbation lemma: 
    \begin{equation}
        \begin{tikzpicture}[baseline={([yshift=-.5ex]current bounding box.center)}]
            \matrix (m) [matrix of nodes, ampersand replacement=\&, column sep = 0.2cm, row sep = 0.2cm]{
                {} \& $4$ \& {} \& {} \& {} \& {} \& \\
                {} \& {} \& {} \& {} \& {} \& {} \& \\
                {} \& {} \& {} \& {} \& {} \& {} \& \\
                {} \& {} \& {} \& {} \& {} \& {} \& \\
                {} \& {} \& {} \& {} \& {} \& {} \& \\
                {} \& {} \& {} \& {} \& {} \& {} \& \\
                {} \& {} \& {} \& {} \& {} \& {} \& \\
                {} \& {} \& {} \& {} \& {} \& {} \& \\
                {} \& {} \& {} \& {} \& {} \& {} \& \\
                {} \& {} \& {} \& {} \& {} \& {} \& \\
                {} \& $1$ \& $2$ \& $3$ \& {} \& {} \& {} \\
            };
            \draw[anti] (m-1-2) -- (m-2-2.center) ;
            \draw[int] (m-2-2.center) -- (m-3-1) ;
            \draw[int] (m-2-2.center) -- (m-3-3) ;
            \draw[int] (m-3-3) -- (m-4-3.center);
            \draw (m-4-3.center) -- (m-5-2) ;
            \draw[int] (m-4-3.center) -- (m-5-4) ;
            \draw[int] (m-5-4) -- (m-6-4.center);
            \draw (m-6-4.center) -- (m-7-3) ;
            \draw[int] (m-6-4.center) -- (m-7-5) ;
            \draw[int] (m-3-1) -- (m-5-1);
            \draw[int] (m-5-1) -- (m-7-1);
            \draw[int] (m-7-1) -- (m-9-1);
            \draw (m-5-2) -- (m-7-2);
            \draw (m-7-2) -- (m-9-2);
            \draw (m-9-2) -- (m-11-2);
            \draw (m-9-4) -- (m-11-4);
            \draw[int] (m-7-5) -- (m-8-5.center);
            \draw (m-8-5.center) -- (m-9-4) ;
            \draw[int] (m-8-5.center) -- (m-9-6) ;
            \draw[int] (m-9-1) -- (m-10-1.center);
            \draw[int] (m-9-6) -- (m-10-6.center);
            \draw[int] (m-10-1.center) -- (m-10-6.center);
            \draw (m-7-3) -- (m-9-3);
            \draw (m-9-3) -- (m-11-3);
        \end{tikzpicture}
        \begin{tikzpicture}[baseline={([yshift=-.5ex]current bounding box.center)}]
            \matrix (m) [matrix of nodes, ampersand replacement=\&, column sep = 0.2cm, row sep = 0.2cm]{
                {} \& $4$ \& {} \& {} \& {} \& {} \& \\
                {} \& {} \& {} \& {} \& {} \& {} \& \\
                {} \& {} \& {} \& {} \& {} \& {} \& \\
                {} \& {} \& {} \& {} \& {} \& {} \& \\
                {} \& {} \& {} \& {} \& {} \& {} \& \\
                {} \& {} \& {} \& {} \& {} \& {} \& \\
                {} \& {} \& {} \& {} \& {} \& {} \& \\
                {} \& {} \& {} \& {} \& {} \& {} \& \\
                {} \& {} \& {} \& {} \& {} \& {} \& \\
                {} \& {} \& {} \& {} \& {} \& {} \& \\
                {} \& $3$ \& $2$ \& $1$ \& {} \& {} \& {} \\
            };
            \draw[anti] (m-1-2) -- (m-2-2.center) ;
            \draw[int] (m-2-2.center) -- (m-3-1) ;
            \draw[int] (m-2-2.center) -- (m-3-3) ;
            \draw[int] (m-3-3) -- (m-4-3.center);
            \draw (m-4-3.center) -- (m-5-2) ;
            \draw[int] (m-4-3.center) -- (m-5-4) ;
            \draw[int] (m-5-4) -- (m-6-4.center);
            \draw (m-6-4.center) -- (m-7-3) ;
            \draw[int] (m-6-4.center) -- (m-7-5) ;
            \draw[int] (m-3-1) -- (m-5-1);
            \draw[int] (m-5-1) -- (m-7-1);
            \draw[int] (m-7-1) -- (m-9-1);
            \draw (m-5-2) -- (m-7-2);
            \draw (m-7-2) -- (m-9-2);
            \draw (m-9-2) -- (m-11-2);
            \draw (m-9-4) -- (m-11-4);
            \draw[int] (m-7-5) -- (m-8-5.center);
            \draw (m-8-5.center) -- (m-9-4) ;
            \draw[int] (m-8-5.center) -- (m-9-6) ;
            \draw[int] (m-9-1) -- (m-10-1.center);
            \draw[int] (m-9-6) -- (m-10-6.center);
            \draw[int] (m-10-1.center) -- (m-10-6.center);
            \draw (m-7-3) -- (m-9-3);
            \draw (m-9-3) -- (m-11-3);
        \end{tikzpicture}
        \begin{tikzpicture}[baseline={([yshift=-.5ex]current bounding box.center)}]
            \matrix (m) [matrix of nodes, ampersand replacement=\&, column sep = 0.2cm, row sep = 0.2cm]{
                {} \& $4$ \& {} \& {} \& {} \& {} \& \\
                {} \& {} \& {} \& {} \& {} \& {} \& \\
                {} \& {} \& {} \& {} \& {} \& {} \& \\
                {} \& {} \& {} \& {} \& {} \& {} \& \\
                {} \& {} \& {} \& {} \& {} \& {} \& \\
                {} \& {} \& {} \& {} \& {} \& {} \& \\
                {} \& {} \& {} \& {} \& {} \& {} \& \\
                {} \& {} \& {} \& {} \& {} \& {} \& \\
                {} \& {} \& {} \& {} \& {} \& {} \& \\
                {} \& {} \& {} \& {} \& {} \& {} \& \\
                {} \& $1$ \& $2$ \& {} \& {} \& $3$ \& {} \\
            };
            \draw[anti] (m-1-2) -- (m-2-2.center) ;
            \draw[int] (m-2-2.center) -- (m-3-1) ;
            \draw[int] (m-2-2.center) -- (m-3-3) ;
            \draw[int] (m-3-3) -- (m-4-3.center);
            \draw (m-4-3.center) -- (m-5-2) ;
            \draw[int] (m-4-3.center) -- (m-5-4) ;
            \draw[int] (m-5-4) -- (m-6-4.center);
            \draw (m-6-4.center) -- (m-7-3) ;
            \draw[int] (m-6-4.center) -- (m-7-5) ;
            \draw[int] (m-3-1) -- (m-5-1);
            \draw[int] (m-5-1) -- (m-7-1);
            \draw[int] (m-7-1) -- (m-9-1);
            \draw (m-5-2) -- (m-7-2);
            \draw (m-7-2) -- (m-9-2);
            \draw (m-9-2) -- (m-11-2);
            \draw (m-9-6) -- (m-11-6);
            \draw[int] (m-7-5) -- (m-8-5.center);
            \draw[int] (m-8-5.center) -- (m-9-4) ;
            \draw (m-8-5.center) -- (m-9-6) ;
            \draw[int] (m-9-1) -- (m-10-1.center);
            \draw[int] (m-9-4) -- (m-10-4.center);
            \draw[int] (m-10-1.center) -- (m-10-4.center);
            \draw (m-7-3) -- (m-9-3);
            \draw (m-9-3) -- (m-11-3);
        \end{tikzpicture}
        \begin{tikzpicture}[baseline={([yshift=-.5ex]current bounding box.center)}]
            \matrix (m) [matrix of nodes, ampersand replacement=\&, column sep = 0.2cm, row sep = 0.2cm]{
                {} \& $4$ \& {} \& {} \& {} \& {} \& \\
                {} \& {} \& {} \& {} \& {} \& {} \& \\
                {} \& {} \& {} \& {} \& {} \& {} \& \\
                {} \& {} \& {} \& {} \& {} \& {} \& \\
                {} \& {} \& {} \& {} \& {} \& {} \& \\
                {} \& {} \& {} \& {} \& {} \& {} \& \\
                {} \& {} \& {} \& {} \& {} \& {} \& \\
                {} \& {} \& {} \& {} \& {} \& {} \& \\
                {} \& {} \& {} \& {} \& {} \& {} \& \\
                {} \& {} \& {} \& {} \& {} \& {} \& \\
                {} \& $3$ \& {} \& {} \& $1$ \& $2$ \\
            };
            \draw[anti] (m-1-2) -- (m-2-2.center) ;
            \draw[int] (m-2-2.center) -- (m-3-1) ;
            \draw[int] (m-2-2.center) -- (m-3-3) ;
            \draw[int] (m-3-3) -- (m-4-3.center);
            \draw (m-4-3.center) -- (m-5-2) ;
            \draw[int] (m-4-3.center) -- (m-5-5) ;
            \draw[int] (m-5-5) -- (m-6-5.center);
            \draw[int] (m-6-5.center) -- (m-7-4) ;
            \draw (m-6-5.center) -- (m-7-6) ;
            \draw[int] (m-3-1) -- (m-5-1);
            \draw[int] (m-5-1) -- (m-7-1);
            \draw[int] (m-7-1) -- (m-9-1);
            \draw[int] (m-7-4) -- (m-8-4.center);
            \draw[int] (m-8-4.center) -- (m-9-3) ;
            \draw (m-8-4.center) -- (m-9-5) ;
            \draw[int] (m-9-1) -- (m-10-1.center);
            \draw[int] (m-9-3) -- (m-10-3.center);
            \draw[int] (m-10-1.center) -- (m-10-3.center);
            \draw (m-7-6) -- (m-9-6);
            \draw (m-9-6) -- (m-11-6);
            \draw (m-9-5) -- (m-11-5);
            \draw (m-5-2) -- (m-7-2);
            \draw (m-7-2) -- (m-9-2);
            \draw (m-9-2) -- (m-11-2);
        \end{tikzpicture}
    \end{equation}
    We note that each realization comes with a factor of~2, because we can invert the order of the input legs (cf.~first and second diagram). They come with a further factor of~2 because the lowest cubic vertex can always be flipped (cf.~first and third diagram). In fact, all the vertices except for the very first one can be flipped, one merely has to select the appropriate different (and unique) permutation of the input labels (cf.~fourth diagram). The first vertex cannot be flipped as this would bring a loop line to the right of a vertex which vanishes due to the form of $\sfH$, cf.~\eqref{eq:vanish_diag}. Altogether, we have $2^3$ ways of flipping vertices and $2$ ways of choosing the order of input labels, compensating the factor of $(\tfrac{1}{2})^4$ and giving an overall symmetry factor of $\Sigma_\Gamma=1$.
    
    \begin{lemma}
        The HPL produces the correct symmetry factor $\Sigma_\Gamma$ for the 1-loop diagrams of the form
        \begin{equation}
            \begin{tikzpicture}[baseline={([yshift=-.5ex]current bounding box.center)}]
                \matrix (m) [matrix of nodes, ampersand replacement=\&, column sep = 0.2cm, row sep = 0.2cm]{
                    {} \& $n+1$ \& {} \\
                    {} \& {} \& {} \\
                    {} \& {} \& {} \\
                    {} \& {} \& {} \\
                    $1$ \& $2$ \& $\ldots$ \& $n$\\
                };
                \draw (m-1-2) -- (m-2-2.center) ;
                \draw (m-5-2) -- (m-4-2.center) ;
                \draw (m-2-2.center) to[out=0,in=0, distance=0.69cm] node[midway,inner sep=0pt,outer sep=0pt](NP1) {} (m-4-2.center) ;
                \draw (m-2-2.center) to[out=180,in=180, distance=0.69cm] node[midway,inner sep=0pt,outer sep=0pt](NP2) {} (m-4-2.center);
                \draw (NP2) -- (m-5-1) ;
                \draw (NP1) -- (m-5-4) ;
            \end{tikzpicture}
        \end{equation}
        which is $\Sigma_\Gamma=\tfrac12$ for $n=1$ and $\Sigma_\Gamma=1$ else.
    \end{lemma}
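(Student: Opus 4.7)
The plan is to count the HPL diagrams reproducing the given Feynman topology and to verify that the count, multiplied by the vertex factors $\prod_v 1/(k_v-1)!$ coming from the $\sfm_{k_v-1}$, reproduces $\Sigma_\Gamma$. The argument splits into the cases $n=1$ and $n\geq 2$, the latter following exactly the pattern of the 4-point example already treated after \eqref{diag:4pt1lp}.

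For $n=1$ the topology is the one-loop fish with two cubic vertices joined by two parallel propagators, so formula \eqref{eq:formula_Sigma} gives $\Sigma_\Gamma = 1/2!$ via the factor $n_j=2$. The only potentially non-vanishing contribution to $\sfD^\circ$ comes from $\sfP_0\circ\sfD_\rmint\circ\Iint\circ\sfU\circ\sfE_0(\phi_1)$, since the reversed ordering vanishes because $\Iint\circ\sfE_0(\phi_1)=0$ ($\sfD_\rmint$ requires at least two inputs). Unfolding $\sfH_0$ via \eqref{eq:lift_to_tensor_algebra} and discarding configurations killed by the vanishing rule \eqref{eq:vanish_diag}, a direct enumeration leaves exactly two non-vanishing HPL diagrams. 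Each contributes the vertex factor $(1/2!)^2=1/4$ from its two cubic vertices, so the sum reproduces $\Sigma_\Gamma=1/2$.

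For $n\geq 2$ the topology is a one-loop ``box'' (a triangle when $n=2$) with cubic vertices $T$, $L$, $R$ attached to $\phi_{n+1}$, $\phi_1$, $\phi_n$ respectively and an $n$-valent vertex $B$ attached to $\phi_2,\ldots,\phi_{n-1}$, which collapses away when $n=2$. All vertices are distinguishable and no pair of internal vertices is joined by multiple propagators, so $\Sigma_\Gamma=1$. The combined vertex factor is $(1/2!)^3\cdot 1/(n-1)!$ (the second factor becoming $1$ for $n=2$), so the claim reduces to exhibiting $8\cdot(n-1)!$ distinct HPL realizations of the labeled diagram. Following the 4-point argument, these decompose into a factor $2^2$ from flipping the two non-root cubic vertices $L$ and $R$, a factor $(n-1)!$ from reorderings of the $n-1$ inputs of $\sfm_{n-1}$ at $B$, and a further factor $2$ from the two admissible left-right orientations of the inputs along the bottom of the HPL diagram. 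The root $T$ cannot be flipped, as this would place a dotted interacting field to the right of another vertex and trigger \eqref{eq:vanish_diag}.

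The main obstacle is the precise accounting of the $(n-1)!$ factor at the $n$-valent vertex $B$: one must check that every ordering of the $n-1$ inputs of $\sfm_{n-1}$ at $B$ is realized by a non-vanishing HPL diagram, and that distinct orderings yield distinct diagrams. This requires tracing the recursion \eqref{eq:quantum_recursion} through $B$ and observing that the asymmetry of $\sfH_0$ fixes exactly one ``side'' of $B$ (the one on the recursion's path toward the root), while the remaining $n-1$ inputs remain fully permutable.
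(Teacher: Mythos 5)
Your proof is correct and follows essentially the same route as the paper: the $n=1$ case by directly enumerating the two non-vanishing HPL diagrams arising from $\sfP_0\circ\sfD_\rmint\circ\Iint\circ\sfU\circ\sfE_0$ (the paper phrases this as ``half as many input-label orderings''), and the $n\geq 2$ case by generalizing the flip-and-orientation count used for the 4-point diagram~\eqref{diag:4pt1lp}, with the same (acknowledged) reliance on the fact that each admissible assignment of input orderings is realized by exactly one non-vanishing HPL diagram. The only caveat is interpretive: you read the picture as a four-vertex box with one $n$-valent bottom vertex, whereas the paper's ``readily generalizes to the $(n+1)$-point case'' more plausibly means a cycle of $n+1$ cubic vertices; both readings yield the stated $\Sigma_\Gamma$, and your counting ($8\,(n-1)!$ realizations against the vertex factor $(1/2)^3/(n-1)!$, versus $2^{n+1}$ against $(1/2)^{n+1}$ for the cycle) transfers verbatim between them.
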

    \begin{proof}
        The case $n>1$ is evident from the argument above for diagram~\eqref{diag:4pt1lp} which readily generalize to the $(n+1)$-point case. For $n=1$, there is just a single way of permuting the single input label, i.e.~half as many ways as for $n>1$, which leads to $\Sigma_\Gamma=\tfrac12$.
    \end{proof}

    \subsection{Homological perturbation lemma: main theorem}
    
    We now come to our main result.
    \begin{theorem}
        The HPL reproduces each Feynman diagram $\Gamma$ with the correct symmetry factor $\Sigma_\Gamma$.
    \end{theorem}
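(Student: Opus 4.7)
The target identity, obtained by comparing the HPL output to~\eqref{eq:formula_Sigma}, is
\begin{equation*}
    \caN(\Gamma)\,\prod_v \frac{1}{(k_v-1)!} \;=\; \Sigma_\Gamma\ ,
\end{equation*}
where $\caN(\Gamma)$ counts the distinct non-vanishing HPL diagrams produced by the recursion~\eqref{eq:quantum_recursion} that project onto the Feynman graph $\Gamma$, $k_v$ denotes the valence of vertex $v$, and each $A_\infty$-vertex contributes weight $\frac{1}{(k_v-1)!}$ via $\sfm_{k_v-1}=\frac{1}{(k_v-1)!}\mu_{k_v-1}$. My plan is to prove this counting identity by establishing a bijection between the HPL diagrams contributing to $\Gamma$ and the enumeration data used in Algorithm~\ref{alg:main}, so that the HPL recursion faithfully mirrors the generating-functional derivation of $\Sigma_\Gamma$ given in Section~\ref{ssec:gen_func}.

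The base case is a tree-level argument: reading an HPL diagram from bottom to top traces out a run of Algorithm~\ref{alg:main}, with the distinguished external leg $\phi_{n+1}$ as the origin and the nested operators $\Iint=\sfH_0\circ\sfD_\rmint$ attaching successive vertices. Two multiplicity counts have to be verified. First, at every non-root vertex, the $(k_v-1)!$ orderings of the sub-diagrams hanging below it give $(k_v-1)!$ distinct HPL diagrams that collapse to the same Feynman graph, while the top vertex receives an analogous factor from the explicit sum over permutations in~\eqref{eq:scattering_amplitude_tree_level}. Second, the vanishing rule~\eqref{eq:vanish_diag}, a direct consequence of the asymmetric form of $\sfH_0$ in~\eqref{eq:lift_to_tensor_algebra}, rules out all spurious orderings, so that each surviving HPL diagram arises exactly once in the recursion~\eqref{eq:tree_recursion}; this matches $\Sigma_\Gamma=1$ for trees. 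The loop corrections in~\eqref{eq:quantum_recursion} are then handled by the additional operator $\sfU=\sfH_0\circ(\rmi\hbar\Delta^*)$, which creates one internal propagator per insertion: $\ell_i$ applications at the same vertex produce the tadpole cluster~\eqref{diag:self_loops} with $\ell_i!$ ordering multiplicity and a further $2^{\ell_i}$ from the orientation freedom of $\Delta^*$, together yielding the factor $\frac{1}{\ell_i!\,2^{\ell_i}}$, whereas $n_j$ parallel propagators in a bundle as in~\eqref{diag:2-loops} arise from $n_j!$ orderings of $n_j$ consecutive $\sfU$ insertions.

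The step I expect to be the main obstacle is the factor $\frac{1}{\pi_v}$, which encodes automorphisms of $\Gamma$ permuting internal vertices that are structurally equivalent but not positionally equivalent within an HPL diagram. The key claim to verify is that the HPL recursion inherently picks out a canonical traversal order of $\Gamma$, namely the depth-first order rooted at $\phi_{n+1}$ imposed by the nesting of $\Iint$ and $\sfU$, so that two HPL realizations related by a vertex-permuting automorphism coincide as terms in the recursion rather than being summed independently. The delicate part is the treatment of combined automorphisms that simultaneously exchange internal vertices together with tadpoles or multi-edges, where the multiplicities of the previous step must be adjusted; I would handle this by an induction on the loop number combined with a case analysis of the elementary automorphism types. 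Once this quotient is verified, multiplying the three independent counts produces
\begin{equation*}
    \caN(\Gamma)\;=\;\frac{\prod_v(k_v-1)!}{\pi_v\,\prod_i \ell_i!\,2^{\ell_i}\,\prod_j n_j!}\ ,
\end{equation*}
which is the required identity.
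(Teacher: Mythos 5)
Your setup coincides with the paper's: the identity $\caN(\Gamma)\prod_v\tfrac{1}{(k_v-1)!}=\Sigma_\Gamma$ is exactly the bookkeeping used there (e.g.\ eight HPL diagrams times $\tfrac{1}{2^4}$ for the two-loop diagram~\eqref{diag:ex_1}), and your tree-level base case --- the $(k_v-1)!$ reorderings of the subtrees below each vertex all survive after suitable vertical shifts and compensate the weight of $\sfm_{k_v-1}$, with~\eqref{eq:vanish_diag} killing everything else --- is precisely Lemma~\ref{lem:1}. The gaps are in the loop sector. Your accounting of $\tfrac{1}{\ell_i!\,2^{\ell_i}}$ and $\tfrac{1}{n_j!}$ is stated backwards: if $\ell_i!$ closure orderings and $2^{\ell_i}$ orientations really were multiplicities of distinct non-vanishing HPL diagrams, they would enter the numerator of $\caN(\Gamma)$ and \emph{cancel} the symmetry factor rather than produce it. These factors appear precisely because the asymmetry of $\sfH_0$ kills all but one of the variants: $\Delta^*$ has no surviving orientation freedom (only the insertion with the field on the left and the antifield on the right survives the subsequent $\sfH_0$), and only one order of loop closure is non-vanishing by~\eqref{eq:vanish_diag}. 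Moreover, the residual $\tfrac{1}{\ell_i!2^{\ell_i}}$ only emerges after balancing the vertex weight $\tfrac{1}{r!}$ against the $\binom{r}{2}\binom{r-2}{2}\cdots$ choices of which legs are paired into which loop; this computation, the content of the paper's self-loop lemma, is absent from your sketch, and the analogous count for the multi-edge bundles~\eqref{diag:2-loops} is likewise only asserted.

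Second, the factor $\tfrac{1}{\pi_v}$, which you correctly single out as the crux, is left as a plan (``induction on the loop number combined with a case analysis of the elementary automorphism types'') rather than a proof. The paper avoids any case analysis of automorphisms: it reduces step by step to 1PI tadpole diagrams (stripping off tree parts and then external legs, each reduction shown to preserve the count), and then proves that each Wick contraction enumerated by Algorithm~\ref{alg:main} corresponds to exactly one non-vanishing HPL diagram, uniqueness of the closure order and of the interleaving of loop closures with vertex creations again being enforced by~\eqref{eq:vanish_diag}. That single bijection disposes of $\pi_v$ and $n_j!$ simultaneously, so your ``canonical depth-first traversal'' intuition is the right one; but the uniqueness statement is exactly what must be proved, not assumed, and as it stands your argument establishes the theorem only for tree diagrams.
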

    The proof is based on the fact that each step in the algorithm~\ref{alg:main} has a clear correspondence in the construction of HPL diagrams, in inverse operational order: step one is just the final combination of operators $\sfP\circ \sfD_\rmint$. Step two does not require any operator insertion, while steps three and four correspond to actions of operators $\sfU$ and $\Iint$, respectively. The sensitive issues are around the order of the incoming legs of vertices, which is asymmetric due to~\eqref{eq:vanish_diag}. 

    We begin the proof with the fact that any Feynman diagram is indeed produced by the homological perturbation lemma.
    \begin{lemma}
        The HPL reproduces each Feynman diagram at least once in a non-vanishing way.
    \end{lemma}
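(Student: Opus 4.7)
The plan is to exhibit, for every Feynman diagram $\Gamma$, a specific sequence of operator insertions in the recursion~\eqref{eq:quantum_recursion} whose HPL diagram reproduces $\Gamma$ and which survives the vanishing mechanism~\eqref{eq:vanish_diag}. The construction runs in lock-step with Algorithm~\ref{alg:main}, which traverses $\Gamma$ starting from the distinguished leg $\phi_{n+1}$ and adds one vertex, external pairing, or loop at a time.

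First I would fix the dictionary between the steps of Algorithm~\ref{alg:main} and operator insertions, reading the HPL diagram from top to bottom (the order of operator application being opposite to the bottom-to-top drawing convention). The outermost $\sfP_0\circ\sfD_\rmint$ in~\eqref{eq:quantum_recursion} realises step~1: it produces the vertex attached to the origin leg $\phi_{n+1}$. Thereafter each step is implemented by the next factor in the iteration of $\sfE=\sfE_0-\Iint\circ\sfE-\sfU\circ\sfE$: pairing off an external leg (step~2) corresponds to an identity factor from $\sfE_0$ routing a leg straight down, closing a loop at an already-present vertex (step~3) is an insertion of $\sfU=\sfH_0\circ\rmi\hbar\Delta^*$, and attaching an unprocessed vertex through a propagator (step~4) is an insertion of $\Iint=\sfH_0\circ\sfD_\rmint$. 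Since Algorithm~\ref{alg:main} visits every vertex, propagator and external leg of $\Gamma$ exactly once, the HPL diagram built in this way has precisely the topology of $\Gamma$.

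The delicate point, and the main obstacle, is the ordering of tensor factors in each row. Because $\sfH_0|_{\bigotimes^k\sfV}=\sum_{i+j=k-1}\sfid^{\otimes i}\otimes\sfh\otimes(\sfe\circ\sfp)^{\otimes j}$ places $\sfe\circ\sfp$ only to the right of each $\sfh$, and because $\sfp\circ\sfh=0$, any HPL diagram in which a dotted line sits at a slot that is later acted upon by an $\sfe\circ\sfp$ is killed. My prescription is to place, at every application of $\Iint$ or $\sfU$, the newly opened $\sfh$-line at the rightmost available position of the current row, and to consistently position the ``next vertex to be processed'' at the right, so that every subsequent $\sfh$ acts at or to the left of it while its accompanying $\sfe\circ\sfp$ factors fall on slots that are either inert or already finalised as on-shell lines. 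The key step is an induction along Algorithm~\ref{alg:main} verifying that this rightward-biased convention can always be maintained: the DFS-like nature of step~4, the fact that step~3 merely joins two preexisting lines via $\rmi\hbar\Delta^*$ without creating persistent dottings further to the right, and the freedom to reorder the input legs at each interaction vertex, together ensure that no $\sfe\circ\sfp$ ever lands on a dotted line. Once this ordering claim is verified, the HPL diagram constructed by following Algorithm~\ref{alg:main} is a non-vanishing realisation of $\Gamma$.
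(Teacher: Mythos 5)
Your proposal is correct and follows essentially the same route as the paper: the paper's proof likewise runs Algorithm~\ref{alg:main} with the added prescription that open incoming legs are always filled from right to left, which is exactly your rightmost-available-slot convention and which guarantees that no $\sfe\circ\sfp$ produced by a later-applied $\sfH_0$ ever lands on a dotted line. One caution on phrasing: in the order in which the operators are actually composed (bottom to top in an HPL diagram) each new $\sfh$ must sit to the \emph{right} of all surviving dotted lines, so your clause ``every subsequent $\sfh$ acts at or to the left of it'' is only correct if ``subsequent'' is read in the algorithm's top-down order, opposite to the operator order.
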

    \begin{proof}
        This is evident from algorithm~\ref{alg:main} with the additional prescription\footnote{only temporarily imposed for this proof} in steps 2, 3 and 4 that open incoming legs always have to be filled from right to left. This guarantees that no propagator is produced to the left of an existing line which would cause the diagram to vanish by~\eqref{eq:vanish_diag}.
    \end{proof}

    Next, we present a sequence of lemmata in which each lemma shows that symmetry factors of more complicated Feynman diagrams are reproduced by the HPL, if this is true for simpler diagrams. To streamline the argument, we allow for diagrams with binary vertices as well as tadpole diagrams in our considerations. We end up with tadpole diagrams with no loops involving a single vertex, and a final lemma shows that all their symmetry factors are obtained from the HPL.
    
    We start with tree diagrams:
    \begin{lemma}\label{lem:1}
        The HPL creates each tree diagram $\Gamma$ in the tree level scattering amplitude with a symmetry factor of $\Sigma_\Gamma=1$.
    \end{lemma}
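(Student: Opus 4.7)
The plan is to establish a bijection between non-vanishing HPL realisations of a tree Feynman diagram $\Gamma$ contributing to $\caA(\phi_1,\ldots,\phi_{n+1})$ and planar embeddings of $\Gamma$, regarded as a rooted tree with root at the external leg carrying $\phi_{n+1}$. A simple count then matches the $\prod_v(k_v-1)!$ embeddings against the combinatorial vertex factors $\prod_v \tfrac{1}{(k_v-1)!}$ arising from the normalisation $\sfm_n=\mu_n/n!$ of the $A_\infty$ products.

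First I would orient $\Gamma$ towards $\phi_{n+1}$, so that every internal vertex $v$ of valency $k_v$ has one upward leg and $k_v-1$ downward children (further subtrees or external leaves). A planar embedding is then a choice of left-to-right ordering of the children at each internal vertex, and the total number of such embeddings is evidently $\prod_v (k_v-1)!$.

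To each non-vanishing HPL realisation producing $\Gamma$ I would associate the planar embedding read off from the left-to-right positions, inside the HPL diagram, of the legs meeting each vertex. For the inverse, I would construct the HPL diagram from a given planar embedding as follows: the leaves read left-to-right fix the permutation $\sigma$ of inputs appearing in~\eqref{eq:scattering_amplitude_tree_level}, and the diagram is then built bottom-up by repeatedly applying $\Iint=\sfH_0\circ\sfD_\rmint$ to contract the leftmost internal vertex whose children are currently present as adjacent tensor factors, finishing with the outer $\sfP_0\circ\sfD_\rmint$ for the root vertex. The main obstacle, and the step requiring the most care, is to verify that this bottom-up, left-to-right order is both forced and always admissible: any other order would leave a dotted line sitting strictly to the right of the vertex next being contracted, which is then annihilated by the $\sfe\circ\sfp$ factor in $\sfH_0$, cf.~\eqref{eq:lift_to_tensor_algebra} and~\eqref{eq:vanish_diag}; conversely, the prescribed order never produces such an obstruction, because freshly created dotted lines are introduced only at or to the left of the current processing position.

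Once the bijection is in hand the conclusion is immediate: for each tree $\Gamma$ the recursion~\eqref{eq:tree_recursion} together with the sum over permutations in~\eqref{eq:scattering_amplitude_tree_level} yields exactly $\prod_v(k_v-1)!$ copies of $\Gamma$, each weighted by $\prod_v \tfrac{1}{(k_v-1)!}$ (on top of the couplings $\prod_v c_{k_v}$ that we conventionally suppress), so the total combinatorial weight is $1$, matching $\Sigma_\Gamma=1$ as required.
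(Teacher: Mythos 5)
Your proof is correct and follows essentially the same route as the paper's: both arguments rest on the observation that the asymmetry of $\sfH_0$ forces a unique non-vanishing vertical ordering of the $\Iint$-insertions for each left-to-right arrangement of the tree, so that the $\prod_v (k_v-1)!$ arrangements exactly cancel the vertex factors $\prod_v \tfrac{1}{(k_v-1)!}$ coming from $\mu_n = n!\,\sfm_n$. Your packaging of the count as a bijection with planar rooted embeddings is merely a more explicit bookkeeping of what the paper phrases as ``permuting the input legs of each vertex, possibly shifting subtrees to avoid the cancellation~\eqref{eq:vanish_diag},'' but the substance is identical.
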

    \begin{proof}
        It is clear that each $(n+1)$-point tree diagram can be turned into a valid HPL diagram by selecting the first $n$ external legs as inputs and the last leg as an output. Note that for the HPL diagram not to be vanishing by identity~\eqref{eq:vanish_diag}, it is enough that the operator $\Iint$ always acts as far to the left as possible to produce the desired diagram. For example, the first two diagrams below contribute, while the last one is vanishing due to~\eqref{eq:vanish_diag}: 
        \begin{equation}\label{eq:tree_examples}
            \begin{tikzpicture}[baseline={([yshift=-.5ex]current bounding box.center)}]
                \matrix (m) [matrix of nodes, ampersand replacement=\&, column sep = 0.03cm, row sep = 0.2cm]{
                    {} \& {} \& {} \& {} \& {} \& {} \& {} \& {} \\
                    {} \& {} \& {} \& {} \& {} \& {} \& {} \& {} \\
                    {} \& {} \& {} \& {} \& {} \& {} \& {} \& {} \\
                    {} \& {} \& {} \& {} \& {} \& {} \& {} \& {} \\
                    {} \& {} \& {} \& {} \& {} \& {} \& {} \& {} \\
                    {} \& {} \& {} \& {} \& {} \& {} \& {} \& {} \\
                    {} \& {} \& {} \& {} \& {} \& {} \& {} \& {} \\
                    {} \& {} \& {} \& {} \& {} \& {} \& {} \& {} \\
                    {} \& {} \& {} \& {} \& {} \& {} \& {} \& {} \\
                    {} \& {} \& {} \& {} \& {} \& {} \& {} \& {} \\
                    $\phi_1$ \& $\phi_2$ \& {} \& $\phi_3$ \& $\phi_4$ \& {} \& $\phi_5$ \\
                };
                \draw[anti] (m-1-3) -- (m-2-3.center) ;
                \draw[int] (m-2-3.center) -- (m-3-2) ;
                \draw[int] (m-2-3.center) -- (m-3-6) ;
                \draw[int] (m-3-2) -- (m-5-2);
                \draw[int] (m-3-6) -- (m-4-6.center);
                \draw (m-4-6.center) -- (m-5-5) ;
                \draw (m-4-6.center) -- (m-5-7) ;
                \draw[int] (m-5-2) -- (m-6-2.center);
                \draw (m-6-2.center) -- (m-7-1) ;
                \draw[int] (m-6-2.center) -- (m-7-3) ;
                \draw[int] (m-7-3) -- (m-8-3.center);
                \draw (m-8-3.center) -- (m-9-2) ;
                \draw (m-8-3.center) -- (m-9-4) ;
                \draw (m-7-1) -- (m-9-1);
                \draw (m-9-1) -- (m-11-1);
                \draw (m-9-2) -- (m-11-2);
                \draw (m-9-4) -- (m-11-4);
                \draw (m-5-5) -- (m-7-5);
                \draw (m-7-5) -- (m-9-5);
                \draw (m-9-5) -- (m-11-5);
                \draw (m-5-7) -- (m-7-7);
                \draw (m-7-7) -- (m-9-7);
                \draw (m-9-7) -- (m-11-7);
            \end{tikzpicture}
            ~~~~~~ 
            \begin{tikzpicture}[baseline={([yshift=-.5ex]current bounding box.center)}]
                \matrix (m) [matrix of nodes, ampersand replacement=\&, column sep = 0.03cm, row sep = 0.2cm]{
                    {} \& {} \& {} \& {} \& {} \& {} \& {}\\
                    {} \& {} \& {} \& {} \& {} \& {} \& {}\\
                    {} \& {} \& {} \& {} \& {} \& {} \& {}\\
                    {} \& {} \& {} \& {} \& {} \& {} \& {}\\
                    {} \& {} \& {} \& {} \& {} \& {} \& {}\\
                    {} \& {} \& {} \& {} \& {} \& {} \& {}\\
                    {} \& {} \& {} \& {} \& {} \& {} \& {}\\
                    {} \& {} \& {} \& {} \& {} \& {} \& {}\\
                    {} \& {} \& {} \& {} \& {} \& {} \& {}\\
                    {} \& {} \& {} \& {} \& {} \& {} \& {}\\
                    $\phi_4$ \& {} \& $\phi_5$ \& $\phi_1$ \& $\phi_2$ \& {} \& $\phi_3$ \\
                };
                \draw[anti] (m-1-3) -- (m-2-3.center) ;
                \draw[int] (m-2-3.center) -- (m-3-2) ;
                \draw[int] (m-2-3.center) -- (m-3-5) ;
                \draw[int] (m-3-2) -- (m-5-2);
                \draw[int] (m-5-2) -- (m-7-2);
                \draw[int] (m-3-5) -- (m-4-5.center);
                \draw (m-4-5.center) -- (m-5-4) ;
                \draw[int] (m-4-5.center) -- (m-5-6) ;
                \draw[int] (m-5-6) -- (m-6-6.center);
                \draw (m-6-6.center) -- (m-7-5) ;
                \draw (m-6-6.center) -- (m-7-7) ;
                \draw[int] (m-7-2) -- (m-8-2.center);
                \draw (m-8-2.center) -- (m-9-1) ;
                \draw (m-8-2.center) -- (m-9-3) ;
                \draw (m-9-1) -- (m-11-1);
                \draw (m-9-3) -- (m-11-3);
                \draw (m-5-4) -- (m-7-4);
                \draw (m-7-4) -- (m-9-4);
                \draw (m-9-4) -- (m-11-4);
                \draw (m-7-5) -- (m-9-5);
                \draw (m-9-5) -- (m-11-5);
                \draw (m-7-7) -- (m-9-7);
                \draw (m-9-7) -- (m-11-7);
            \end{tikzpicture}
            ~~~~~~ 
            \begin{tikzpicture}[baseline={([yshift=-.5ex]current bounding box.center)}]
                \matrix (m) [matrix of nodes, ampersand replacement=\&, column sep = 0.03cm, row sep = 0.2cm]{
                    {} \& {} \& {} \& {} \& {} \& {} \& {} \& {} \\
                    {} \& {} \& {} \& {} \& {} \& {} \& {} \& {} \\
                    {} \& {} \& {} \& {} \& {} \& {} \& {} \& {} \\
                    {} \& {} \& {} \& {} \& {} \& {} \& {} \& {} \\
                    {} \& {} \& {} \& {} \& {} \& {} \& {} \& {} \\
                    {} \& {} \& {} \& {} \& {} \& {} \& {} \& {} \\
                    {} \& {} \& {} \& {} \& {} \& {} \& {} \& {} \\
                    {} \& {} \& {} \& {} \& {} \& {} \& {} \& {} \\
                    {} \& {} \& {} \& {} \& {} \& {} \& {} \& {} \\
                    {} \& {} \& {} \& {} \& {} \& {} \& {} \& {} \\
                    $\phi_1$ \& $\phi_2$ \& {} \& $\phi_3$ \& $\phi_4$ \& {} \& $\phi_5$ \\
                };
                \draw[anti] (m-1-3) -- (m-2-3.center) ;
                \draw[int] (m-2-3.center) -- (m-3-2) ;
                \draw[int] (m-2-3.center) -- (m-3-6) ;
                \draw[int] (m-3-6) -- (m-5-6);
                \draw[int] (m-5-6) -- (m-7-6);
                \draw[int] (m-7-6) -- (m-8-6.center);
                \draw (m-8-6.center) -- (m-9-5) ;
                \draw (m-8-6.center) -- (m-9-7) ;
                \draw[int] (m-3-2) -- (m-4-2.center);
                \draw (m-4-2.center) -- (m-5-1) ;
                \draw[int] (m-4-2.center) -- (m-5-3) ;
                \draw[int] (m-5-3) -- (m-6-3.center);
                \draw (m-6-3.center) -- (m-7-2) ;
                \draw (m-6-3.center) -- (m-7-4) ;
                \draw (m-5-1) -- (m-7-1);
                \draw (m-7-1) -- (m-9-1);
                \draw (m-9-1) -- (m-11-1);
                \draw (m-7-2) -- (m-9-2);
                \draw (m-9-2) -- (m-11-2);
                \draw (m-7-4) -- (m-9-4);
                \draw (m-9-4) -- (m-11-4);
                \draw (m-9-5) -- (m-11-5);
                \draw (m-9-7) -- (m-11-7);
            \end{tikzpicture}
        \end{equation}
        Thus, once the shape of the tree level HPL diagram and its input legs are fixed, it is produced in precisely one way from the recursion formula~\eqref{eq:tree_recursion} of the homological perturbation lemma.
        
        Each $(n+1)$-ary vertex produced by $\Iint$ comes with a symmetry factor of $\frac{1}{n!}$: $\Iint$ contains $\sfD_\rmint$, which is the sum of the extended, grade-shifted $\sfm_n$ and the latter come with this factor, cf.~\eqref{eq:rel_mu_m}. This factor is compensated by the fact that the input legs of any vertex can be permuted in an arbitrary way. This may require upwards and downwards shifts of subtrees in order to avoid the cancellation by~\eqref{eq:vanish_diag} as exemplified in the first two diagrams of~\eqref{eq:tree_examples}. Also, it may lead to a permutation of the input legs of the diagram; but any permutation of the input legs of the diagram contributes equally with a factor of~1, cf.~equation~\eqref{eq:scattering_amplitude_tree_level}.
    \end{proof}
    
    Having trees under control, we can regard general Feynman diagrams as trees involving one particle irreducible Feynman diagrams\footnote{i.e.~Feynman diagrams which do not become disconnected when removing a single arbitrary edge} as nodes.
    \begin{lemma}
        If the HPL reproduces all one particle irreducible (1PI) Feynman diagrams correctly, it reproduces all Feynman diagrams correctly.
    \end{lemma}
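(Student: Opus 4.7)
The plan is to exploit the standard decomposition of any connected Feynman diagram $\Gamma$ into a tree of 1PI components joined by bridges. A bridge is an edge whose removal disconnects $\Gamma$, and the 1PI components are the maximal connected subgraphs containing no bridge; this decomposition is unique and canonical. Contracting each 1PI component to a single vertex yields a tree $T_\Gamma$ whose nodes are the 1PI pieces $\{\Gamma_v\}$ and whose edges are the bridges of $\Gamma$.

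First I would verify that the symmetry factor factorizes along bridges. Since a bridge is the unique edge connecting its two endpoints once the 1PI components are identified, it carries no nontrivial automorphism, and no automorphism of $\Gamma$ can mix distinct 1PI components. Treating each $\Gamma_v$ as a graph with its bridge-endpoints as additional labeled external legs, one therefore obtains $|\mathsf{Aut}(\Gamma)|=\prod_{v}|\mathsf{Aut}(\Gamma_v)|$, and hence $\Sigma_\Gamma=\prod_{v}\Sigma_{\Gamma_v}$. The same factorization can be read off directly from the formula~\eqref{eq:formula_Sigma}: bridges are single edges between distinct vertices (so contribute $n_j!=1$), they can carry neither a self-loop nor the multi-edge factor, and no permutation of internal vertices that respects the Feynman-diagram structure can exchange vertices across a bridge.

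Next I would match this factorization against the HPL count by viewing algorithm~\ref{alg:main} as a two-level procedure. At the outer level one builds the tree $T_\Gamma$ of 1PI components, where each $\Gamma_v$ is regarded as an effective vertex whose incident legs are the bridges leaving it and the true external legs of $\Gamma$ attached to it; at the inner level one builds each $\Gamma_v$ from its internal vertices and propagators. Each bridge in $\Gamma$ is produced by exactly one application of $\Iint$ whose resulting propagator connects two distinct 1PI components already produced, and every HPL diagram realizing $\Gamma$ admits a unique such two-level parsing. By Lemma~\ref{lem:1}, the outer tree is produced with total combinatorial weight $1$ once each effective 1PI vertex is fixed; by the hypothesis of the lemma, each $\Gamma_v$ is produced by the HPL with weight $\Sigma_{\Gamma_v}$ when considered in isolation with its bridge-endpoints treated as ordinary external legs. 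Multiplying yields $\prod_{v}\Sigma_{\Gamma_v}=\Sigma_\Gamma$.

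The main obstacle is the compatibility of the right-to-left ordering enforced by~\eqref{eq:vanish_diag} across the two levels: the freedom to reorder subtrees that underlies Lemma~\ref{lem:1} must not be spoiled by attaching 1PI pieces containing loops, and the reorderings internal to a $\Gamma_v$ must not accidentally produce an $\sfe\circ\sfp$ hitting a dotted line emerging from a neighbouring 1PI component. I would resolve this by observing that a bridge propagator produced by $\Iint$ is always a dotted (interacting) line joining two components, so once the tree $T_\Gamma$ has been placed in a legal left-to-right layout (as in Lemma~\ref{lem:1}, shifting subtrees vertically as needed), the internal vertex-flip and leg-permutation reorderings available inside each $\Gamma_v$ are independent of what has been drawn outside $\Gamma_v$. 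Thus the $v$-local HPL count supplying $\Sigma_{\Gamma_v}$ factors out cleanly, and the outer tree count supplying $1$ factors in, so the total multiplicity with which the HPL reproduces $\Gamma$ equals $\Sigma_\Gamma$.
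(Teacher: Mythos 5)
Your overall strategy coincides with the paper's: contract each 1PI component of $\Gamma$ to an effective node, handle the resulting tree by Lemma~\ref{lem:1}, and delegate the nodes to the hypothesis. However, two of your intermediate claims are false, and your argument succeeds only because their errors happen to cancel. The claim $|\mathsf{Aut}(\Gamma)|=\prod_v|\mathsf{Aut}(\Gamma_v)|$ (equivalently $\Sigma_\Gamma=\prod_v\Sigma_{\Gamma_v}$) fails whenever isomorphic 1PI components are attached symmetrically to the same vertex, which is precisely the situation the $\pi_v$ factor in~\eqref{eq:formula_Sigma} exists to capture. Concretely, take the two-point function in a theory with $c_3,c_4\neq0$: a quartic vertex carries the two external legs and two bridges, each bridge ending on a cubic vertex with a self-loop. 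The automorphism exchanging the two identical ``lollipops'' gives $\Sigma_\Gamma=\tfrac12\cdot\tfrac12\cdot\tfrac12=\tfrac18$, whereas your product over 1PI components gives $\tfrac14$. Correspondingly, your claim that the outer tree is produced with HPL weight $1$ also fails here: the compensation of the vertex factor $\tfrac{1}{n!}$ in Lemma~\ref{lem:1} requires the $n!$ permutations of the input legs to produce $n!$ \emph{distinct} HPL diagrams, but when two input legs carry identical 1PI subdiagrams, the permutation exchanging them yields literally the same term of the recursion and is counted once, leaving a residual $\tfrac{1}{\pi_v}$ (in the example, the quartic vertex contributes $3\cdot\tfrac{1}{3!}=\tfrac12$, not $1$).

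These two defects are off by the same factor, namely the order of the group of automorphisms of $\Gamma$ permuting 1PI blocks, and exhibiting that cancellation is exactly the substance this lemma requires; as written, you assert two incorrectly computed quantities and conclude that they agree. To repair the argument you should either prove the corrected factorization $|\mathsf{Aut}(\Gamma)|=|A|\cdot\prod_v|\mathsf{Aut}(\Gamma_v)|$, with $A$ the block-permutation group, and separately show that the outer-tree HPL weight is $\tfrac{1}{|A|}$, or argue directly at the level of HPL diagrams as the paper does: each permutation of the input legs of a vertex attached to subdiagrams yields an HPL diagram produced uniquely by the recursion, so the count of \emph{distinct} realizations automatically quotients by coincidences of identical subdiagrams, leaving all remaining arbitrariness inside the 1PI nodes. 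Your treatment of the ordering constraints imposed by~\eqref{eq:vanish_diag} across components is fine and matches the paper; the gap is confined to the block-automorphism bookkeeping.
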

    \begin{proof}
        The proof is evident from the proof of lemma~\ref{lem:1}: any permutation of input legs of vertices connected to subtrees leads to a diagram which is produced in a unique way from the recursion relation~\eqref{eq:quantum_recursion} up to arbitrariness stemming from the 1PI diagrams.
    \end{proof}
    
    Next, we can remove all incoming external legs without affecting the combinatorics:
    \begin{lemma}
        If the HPL reproduces all 1PI tadpole diagrams\footnote{i.e.~diagrams with a single external leg} correctly, then it produces all 1PI diagrams correctly.
    \end{lemma}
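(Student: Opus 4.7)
The plan is to strip the $n$ input external legs $\phi_1,\ldots,\phi_n$ from a 1PI diagram $\Gamma$ with $n+1$ external legs, leaving a 1PI tadpole $\Gamma_0$ retaining only the output leg $\phi_{n+1}$, and to verify that both the target symmetry factor and the HPL realization count rescale consistently.

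First, I would observe from formula~\eqref{eq:formula_Sigma} that $\Sigma_\Gamma=\Sigma_{\Gamma_0}$: every factor $\pi_v$, $\ell_i!\,2^{\ell_i}$ and $n_j!$ counts an internal feature (vertex permutations preserving connectivity, self-loop multiplicities, and multi-edge multiplicities between internal vertices) and is therefore insensitive to deletion of labeled external legs.

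Next, I would compare HPL realization counts. Let $e_v\geq 0$ denote the number of external legs of $\Gamma$ attached to internal vertex $v$, so that $v$ has arity $k_v^{(0)}+e_v$ in $\Gamma$ but only $k_v^{(0)}$ in $\Gamma_0$. The higher product $\sfm_{k_v^{(0)}+e_v}$ at $v$ contributes $\tfrac{1}{(k_v^{(0)}+e_v)!}$ in $\Gamma$ as opposed to $\tfrac{1}{k_v^{(0)}!}$ in $\Gamma_0$. For $\Sigma_\Gamma=\Sigma_{\Gamma_0}$ to survive, the total HPL count for $\Gamma$ (summed over $\sigma\in S_n$ as in~\eqref{eq:scattering_amplitude_tree_level}) must exceed that for $\Gamma_0$ by exactly $\prod_v \frac{(k_v^{(0)}+e_v)!}{k_v^{(0)}!}=\prod_v\binom{k_v^{(0)}+e_v}{e_v}\,e_v!$. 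I would then verify this factor by a direct count: fixing an HPL realization of $\Gamma_0$, each of the $e_v$ labeled external legs at $v$ is to be inserted into one of the $k_v^{(0)}+e_v$ input positions of the promoted vertex, giving $\binom{k_v^{(0)}+e_v}{e_v}$ slot choices, and the $e_v$ labels can be ordered among those slots in $e_v!$ ways, with each such assignment determining the bottom ordering $\sigma$ of the resulting HPL diagram. As in the proof of lemma~\ref{lem:1}, the $\sfH$-asymmetry constraint~\eqref{eq:vanish_diag} is circumvented by vertical shifts of subtrees together with an appropriate choice of $\sigma$, so that each combinatorial arrangement is realized by precisely one non-vanishing HPL diagram.

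The main obstacle I anticipate is confirming that this interleaving of external legs with internal lines at each vertex, combined with the $\sfH$-asymmetry avoidance, is simultaneously injective and exhaustive — no HPL realization of $\Gamma$ is double-counted or omitted. This parallels the tree-lemma argument but demands extra care at vertices incident to both loop edges and external legs, where the availability of vertical shifts is constrained by the loop structure already present in $\Gamma_0$.
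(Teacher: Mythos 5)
Your proposal is correct and follows essentially the same route as the paper: the paper's proof likewise notes that attaching $k$ external legs to a vertex of arity $r$ can be done in $\binom{r}{k}$ non-vanishing ways, turning the vertex factor $\tfrac{1}{r!}$ into $\tfrac{1}{k!}\tfrac{1}{(r-k)!}$, with the $\tfrac{1}{k!}$ absorbed by the permutation sum in~\eqref{eq:scattering_amplitude_tree_level} and the remainder being the vertex factor of the stripped tadpole. Your version merely makes the per-vertex bookkeeping and the invariance $\Sigma_\Gamma=\Sigma_{\Gamma_0}$ more explicit.
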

    \begin{proof}
        The proof here is simply the realization that in step~2 of algorithm~\ref{alg:main}, the $k$ external legs connected to a vertex with $r$ incoming legs can be attached in any of $\binom{r}{k}$ ways without the diagram becoming vanishing. Thus, attaching $k$ external legs reduces the combinatorial factor of $\frac{1}{r!}$ of the vertex to $\frac{1}{k!}\frac{1}{(r-k)!}$. The first factor is compensated by the fact that any permutation of the external legs of the diagram contributes with a factor of~1 by~\eqref{eq:scattering_amplitude_tree_level}. The remaining second factor would have been the factor of the vertex if the external legs had not been there to begin with.
    \end{proof}   
    
    The first advantage of considering exclusively tadpole diagrams is that we do not have to consider permutations of input legs. The second advantage is that we no longer distinguish between lines corresponding to interacting and free fields: all lines encode interacting fields, and any right outgoing leg produced by $\sfU$ and any outgoing leg produced by $\Iint$ to the left of an existing line makes the diagram vanish by~\eqref{eq:vanish_diag}. Both the horizontal (i.e.~contracting the input legs of which vertices) and vertical (i.e.~order in the recursion relation~\eqref{eq:quantum_recursion}) position of the action of $\sfU$ is then unique for 1PI tadpole diagrams.
    \begin{lemma}
        Given a non-vanishing HPL diagram, corresponding to a 1PI tadpole diagram, with all loop closures removed, any permutation of the order in which the loops are closed and any change in the sequence of the order in which loops and vertices are created leads to a vanishing HPL diagram by~\eqref{eq:vanish_diag}.
    \end{lemma}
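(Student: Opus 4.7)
The plan is to exploit the cancellation identity~\eqref{eq:vanish_diag} as the sole combinatorial constraint. Every fresh application of the propagator $\sfh$---whether through $\Iint=\sfH_0\circ\sfD_\rmint$ or through $\sfU=\sfH_0\circ(\rmi\hbar\Delta^*)$---splits the state into everything strictly to the left of the new $\sfh$ (acted on by $\sfid$) and everything strictly to the right (acted on by $\sfe\circ\sfp$). Because $\sfp\circ\sfh=0$, any dotted line sitting to the right of the newly inserted $\sfh$ annihilates the diagram. Hence at every intermediate layer, the position of the next operation must be at or to the right of every dotted line currently crossing that layer.

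Starting from a non-vanishing 1PI tadpole HPL diagram with its loop closures removed, I would then read the diagram upwards layer by layer and show that this rightmost rule forces a unique order. Each $\sfU$ produces two dotted lines (the left one from $\phi^A$, the right one from $\sfh(\phi_A)$) and each $\Iint$ produces one dotted line on the right. In a 1PI tadpole there are no free input legs, so, as noted in the text preceding the lemma, every line currently in the state is dotted or a zigzag, and the geometry of the diagram assigns an unambiguous horizontal position to each dotted leg. Slice by slice, the only $\sfU$ or $\Iint$ that can fire next is the one whose newly created dotted line(s) are rightmost at that layer; any other choice leaves an existing dotted line strictly to the right of the fresh $\sfh$, triggering~\eqref{eq:vanish_diag}.

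Both halves of the conclusion then follow. Swapping the order of two $\sfU$ operations interchanges an earlier dotted pair with a later one, but then the $\sfU$ that is pushed to fire first will, when it inserts its $\sfh$, leave the still-active dotted lines of the other pair (which sit further to the right) in the range of $\sfe\circ\sfp$, producing vanishing. Interleaving a loop creation with vertex creations in any other permutation fails by the same mechanism, because the rightmost-dotted-line constraint totally orders the operations reading from bottom to top.

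The main obstacle I anticipate is carefully formalising ``rightmost'' in the presence of vertices, since $\sfD_\rmint$ can replace several tensor factors by a single antifield that a subsequent $\sfh$ in $\Iint$ converts into one new dotted line. One has to verify that this contraction never introduces ambiguity in the layer-by-layer reading, i.e.\ that the horizontal position of the newly produced dotted line is unambiguously dictated by the positions of the contracted factors and the rightmost rule. Once this bookkeeping is in place, uniqueness of the total order of operations is an immediate consequence of~\eqref{eq:vanish_diag}, and the lemma follows.
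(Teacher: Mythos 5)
Your proposal is correct and takes essentially the same approach as the paper: both rest on the fact that the asymmetry of $\sfH_0$ encoded in~\eqref{eq:vanish_diag} forces every newly created propagator line to lie to the right of all dotted lines already present, which totally orders the loop closures and their interleaving with the vertex creations. The paper merely exhibits this mechanism in two representative pictures (``pictorially, this is immediately clear''), while you spell out the same rightmost-position constraint layer by layer; the bookkeeping you flag as a remaining obstacle is precisely what the paper also leaves implicit.
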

    \begin{proof}
        Pictorially, this is immediately clear:
        \begin{equation}
            \begin{tikzpicture}[baseline={([yshift=0ex]current bounding box.center)}]
                \matrix (m) [matrix of nodes, ampersand replacement=\&, column sep = 0.1cm, row sep = 0.2cm]{
                    {} \& {} \& {} \& {} \\
                    {} \& {} \& {} \& {} \\
                    {} \& {} \& {} \& {} \\
                };
                \draw[int] (m-1-4) -- (m-2-4.center) ;
                \draw[int] (m-2-4.center) -- (m-2-1) ;
                \draw[int] (m-1-3) -- (m-3-3.center) ;
                \draw[int] (m-3-3.center) -- (m-3-1);                
            \end{tikzpicture}
            ~~ \rightarrow ~~
            \begin{tikzpicture}[baseline={([yshift=0ex]current bounding box.center)}]
                \matrix (m) [matrix of nodes, ampersand replacement=\&, column sep = 0.1cm, row sep = 0.2cm]{
                    {} \& {} \& {} \& {} \\
                    {} \& {} \& {} \& {} \\
                    {} \& {} \& {} \& {} \\
                };
                \draw[int] (m-1-4) -- (m-3-4.center) ;
                \draw[int] (m-3-4.center) -- (m-3-1) ;
                \draw[int] (m-1-3) -- (m-2-3.center) ;
                \draw[int] (m-2-3.center) -- (m-2-1);                
            \end{tikzpicture}~~=0
            ~~~~~~\mbox{and}~~~~~~
            \begin{tikzpicture}[baseline={([yshift=0ex]current bounding box.center)}]
                \matrix (m) [matrix of nodes, ampersand replacement=\&, column sep = 0.1cm, row sep = 0.2cm]{
                    {} \& {} \& {} \& {} \\
                    {} \& {} \& {} \& {} \\
                    {} \& {} \& {} \& {} \\
                    {} \& {} \& {} \& {} \\
                };
                \draw[int] (m-1-2) -- (m-3-2.center) ;
                \draw[int] (m-3-2.center) -- (m-4-1) ;
                \draw[int] (m-3-2.center) -- (m-4-3) ;
                \draw[int] (m-1-4) -- (m-2-4.center) ;
                \draw[int] (m-2-4.center) -- (m-2-1);                
            \end{tikzpicture}
            ~~ \rightarrow ~~
            \begin{tikzpicture}[baseline={([yshift=0ex]current bounding box.center)}]
                \matrix (m) [matrix of nodes, ampersand replacement=\&, column sep = 0.1cm, row sep = 0.2cm]{
                    {} \& {} \& {} \& {} \\
                    {} \& {} \& {} \& {} \\
                    {} \& {} \& {} \& {} \\
                    {} \& {} \& {} \& {} \\
                };
                \draw[int] (m-1-2) -- (m-2-2.center) ;
                \draw[int] (m-2-2.center) -- (m-3-1.center) ;
                \draw[int] (m-2-2.center) -- (m-3-3.center) ;
                \draw[int] (m-3-1.center) -- (m-4-1.south) ;
                \draw[int] (m-3-3.center) -- (m-4-3.south) ;
                \draw[int] (m-1-4) -- (m-4-4.center) ;
                \draw[int] (m-4-4.center) -- (m-4-1.west);                
            \end{tikzpicture}~~=0~.
        \end{equation}
    \end{proof}
    
    Next, we consider loops more closely, in particular loops involving a single vertex of type~\eqref{diag:self_loops}.
    \begin{lemma}
        If the HPL reproduces all 1PI tadpole diagrams without loops involving a single vertex, it reproduces all 1PI tadpole diagrams.
    \end{lemma}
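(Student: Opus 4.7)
Given a 1PI tadpole diagram $\Gamma$ with self-loops, I would define $\Gamma_0$ by removing every single-vertex loop: each vertex $v$ of arity $r_v$ carrying $\ell_v$ self-loops becomes a vertex of arity $k_v = r_v - 2\ell_v$ in $\Gamma_0$. Formula~\eqref{eq:formula_Sigma} gives $\Sigma_\Gamma = \Sigma_{\Gamma_0}\prod_v \frac{1}{\ell_v!\,2^{\ell_v}}$, and the hypothesis of the lemma guarantees that the HPL reproduces $\Sigma_{\Gamma_0}$ correctly. Since the HPL attaches a vertex factor of $\frac{1}{(n-1)!}$ to an $n$-valent vertex (the normalisation $\sfm_{n-1}=\frac{1}{(n-1)!}\mu_{n-1}$ implicit in~\eqref{eq:rel_mu_m}), matching $\Sigma_\Gamma$ with what the HPL produces requires that the ratio of HPL realisation counts satisfy
\[
\frac{N(\Gamma)}{N(\Gamma_0)} \;=\; \prod_v \frac{(r_v-1)!}{(k_v-1)!\,\ell_v!\,2^{\ell_v}} \;=\; \prod_v \binom{r_v-1}{2\ell_v}\,(2\ell_v-1)!!~.
\]

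My strategy is to establish this ratio locally, one vertex at a time. Fixing an HPL realisation of $\Gamma_0$, the vertex $v$ appears as a contraction $\sfm_{k_v-1}$ in a definite vertical slot, with its $k_v-1$ input legs arranged in a fixed relative horizontal order. Converting this to a realisation of $\Gamma$ requires inserting $\ell_v$ operators $\sfU$ immediately above $v$: each $\sfU$ produces a pair of dotted lines which becomes two inputs of the now enlarged $\sfm_{r_v-1}$ and is subsequently contracted into one self-loop by that vertex. The factor $\binom{r_v-1}{2\ell_v}$ represents the freedom to choose which $2\ell_v$ of the $r_v-1$ ordered input slots of $\sfm_{r_v-1}$ hold the self-loop legs, the remaining $k_v-1$ slots receiving the non-self-loop inputs in their inherited order; the factor $(2\ell_v-1)!!$ enumerates the perfect matchings of these $2\ell_v$ slots into $\ell_v$ unordered pairs, each pair being the output of a single $\sfU$. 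Multiplying over all vertices with self-loops yields the required global factor.

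The main obstacle is to verify that this combinatorial count coincides with the actual number of valid, pairwise distinct, non-vanishing HPL diagrams producing $\Gamma$. Because all internal lines of a 1PI tadpole diagram are dotted (elements of $\frF_\rmint$), the cancellation mechanism~\eqref{eq:vanish_diag} forces each $\sfU$ to insert its pair at the rightmost position compatible with the target configuration in $\sfm_{r_v-1}$, which in turn fixes the vertical ordering of the $\ell_v$ $\sfU$ operators once the positions and pairings of the self-loop legs have been selected. Establishing that this procedure provides a bijection between the $\binom{r_v-1}{2\ell_v}(2\ell_v-1)!!$ combinatorial choices and the contributing HPL diagrams, by an algorithmic argument in the spirit of Algorithm~\ref{alg:main} and the preceding lemmata, will form the technical core of the proof.
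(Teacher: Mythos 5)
Your proposal is correct and follows essentially the same route as the paper: per vertex you count $\binom{r_v-1}{2\ell_v}(2\ell_v-1)!!$ non-vanishing ways to close the self-loops, which is exactly the paper's $\frac{1}{\ell!}\binom{r}{2}\binom{r-2}{2}\cdots\binom{r+2-2\ell}{2}$ with $r=r_v-1$ open input legs, and you invoke the same mechanism---the asymmetry of $\sfH_0$ via~\eqref{eq:vanish_diag} forcing a unique order of loop closure---to reduce each vertex to the loop-free one times the factor $\frac{1}{\ell_v!\,2^{\ell_v}}$ of~\eqref{eq:formula_Sigma}. The ``technical core'' you defer is precisely the point the paper settles with a one-line assertion that, for a fixed unordered choice of leg pairs, only one ordering of the $\sfU$ insertions survives.
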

    \begin{proof}
        These loops are produced in step~3 of the algorithm, and we may have a partially drawn HPL diagrams before and after closing the loops:
        \begin{equation}
            \begin{tikzpicture}[baseline={([yshift=-.5ex]current bounding box.center)}]
                \matrix (m) [matrix of nodes, ampersand replacement=\&, column sep = 0.1cm, row sep = 0.2cm]{
                    {} \& $\phi_{n+1}$ \& {} \& {} \& {} \& {} \& {} \\
                    {} \& {} \& {} \& {} \& {} \& {} \& {} \\
                    {} \& {} \& $\ddots$ \& {} \& {} \& {} \& \\
                    {} \& {} \& {} \& {} \& {} \& {} \& {} \\
                    {} \& {} \& {} \& {} \& {} \& {} \& {} \\
                    {} \& {} \& {} \& {} \& {} \& {} \& {} \\
                    {} \& {} \& {} \& {} \& {} \& {} \& {} \\
                    {} \& {} \& {} \& {} \& {} \& {} \& {} \\
                };
                \draw[anti] (m-1-2) -- (m-2-2.center) ;
                \draw[int] (m-2-2.center) -- (m-3-1) ;
                \draw[int] (m-2-2.center) -- (m-3-3) ;
                \draw[int] (m-4-4) -- (m-5-4.center);
                \draw[int] (m-5-4.center) -- (m-6-2.center) ;
                \draw[int] (m-5-4.center) -- (m-6-3.center) ;
                \draw[int] (m-5-4.center) -- (m-6-4.center) ;
                \draw[int] (m-5-4.center) -- (m-6-5.center) ;
                \draw[int] (m-5-4.center) -- (m-6-6.center) ;
            \end{tikzpicture}
            ~~~\rightarrow~~~
            \begin{tikzpicture}[baseline={([yshift=-.5ex]current bounding box.center)}]
                \matrix (m) [matrix of nodes, ampersand replacement=\&, column sep = 0.1cm, row sep = 0.2cm]{
                    {} \& $\phi_{n+1}$ \& {} \& {} \& {} \& {} \& {} \\
                    {} \& {} \& {} \& {} \& {} \& {} \& {} \\
                    {} \& {} \& $\ddots$ \& {} \& {} \& {} \& \\
                    {} \& {} \& {} \& {} \& {} \& {} \& {} \\
                    {} \& {} \& {} \& {} \& {} \& {} \& {} \\
                    {} \& {} \& {} \& {} \& {} \& {} \& {} \\
                    {} \& {} \& {} \& {} \& {} \& {} \& {} \\
                    {} \& {} \& {} \& {} \& {} \& {} \& {} \\
                };
                \draw[anti] (m-1-2) -- (m-2-2.center) ;
                \draw[int] (m-2-2.center) -- (m-3-1) ;
                \draw[int] (m-2-2.center) -- (m-3-3) ;
                \draw[int] (m-4-4) -- (m-5-4.center);
                \draw[int] (m-5-4.center) -- (m-6-2.center) ;
                \draw[int] (m-5-4.center) -- (m-6-3.center) ;
                \draw[int] (m-5-4.center) -- (m-6-4.center) ;
                \draw[int] (m-5-4.center) -- (m-6-5.center) ;
                \draw[int] (m-5-4.center) -- (m-6-6.center) ;
                \draw[int] (m-6-3) -- (m-8-3.center);
                \draw[int] (m-6-5) -- (m-8-5.center);
                \draw[int] (m-8-3.center) -- (m-8-5.center);
                \draw[int] (m-6-4) -- (m-7-4.center);
                \draw[int] (m-6-6) -- (m-7-6.center);
                \draw[int] (m-7-4.center) -- (m-7-6.center);
            \end{tikzpicture}
        \end{equation}
        Say that the vertex involving the loops has $r$ open input legs and we have $\ell$ loops to close off. Then we have $\binom{r}{2}$ choices for the first loop, $\binom{r-2}{2}$ choices for the second loop, etc. Also, pairing off the loops with the operator $\sfU$ only leads to a non-vanishing HPL diagram, if the right leg generated by $\sfU$ is to the right of all previously generated right legs. That is, there is a unique order in which the loops can be closed, and we have to divide our counting by an additional factor of $\frac{1}{\ell!}$. Altogether, we have
        \begin{equation}
            \frac{1}{r!}\binom{r}{2}\binom{r-2}{2}\ldots \binom{r+2-2\ell}{2}\frac{1}{\ell!}=\frac{1}{\ell! 2^\ell}\times\frac{1}{(r-2\ell)!}~.
        \end{equation}
        The first factor is precisely the expected factor in~\eqref{eq:formula_Sigma} and the second factor would have been the factor of the vertex if the loops we closed had not been there to begin with.
    \end{proof}
    
    Finally, we can conclude with the following lemma: 
    \begin{lemma} 
        All 1PI tadpole diagrams without loops involving a single vertex are reproduced with the right symmetry factor.
    \end{lemma}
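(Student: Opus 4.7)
The plan is to adapt the argument of Lemma 3.5 for tree diagrams, combined with the bookkeeping established in the preceding lemma on single-vertex self-loops, to a general 1PI tadpole $\Gamma$ with no self-loop. Concretely, the task is to count the HPL realisations of $\Gamma$, multiply by the vertex weight $1/\deg(V)!$ coming from each $\sfm_{\deg(V)}$, and verify that the result is $\Sigma_\Gamma=\pi_v^{-1}\prod_{i<j}(n_{ij}!)^{-1}$ from~(3.8), where $n_{ij}$ is the number of edges joining the distinct vertices $V_i,V_j$.

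First, I would decompose $\Gamma$ into a spanning tree $T$ rooted at the unique external leg plus a set of non-tree edges; in the HPL, each tree edge corresponds to an application of $\Iint$ while each non-tree edge corresponds to an application of $\sfU$. Fixing $T$, Lemma 3.5 shows the tree portion is reproduced with overall coefficient $1$: the $1/k!$ of each $k$-ary vertex is cancelled by the freedom to permute how its $T$-children are arranged left-to-right, with~(2.19) choosing exactly one non-vanishing vertical ordering per permutation. Second, I would handle the non-tree edges: for each pair $(V_i,V_j)$ joined by $n_{ij}$ non-tree edges, the half-edges of $V_i$ and of $V_j$ that are consumed by the $\sfU$-insertions can be chosen in $\binom{d_i}{\ldots}\binom{d_j}{\ldots}$ ways that recombine with the remaining sibling-permutation freedom into vertex factors; the preceding lemma forces a unique vertical ordering of the $n_{ij}$ insertions, leaving the residual $1/n_{ij}!$ which is precisely the factor appearing in~(3.8).

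Finally, summing over distinct rooted spanning trees $T$ of $\Gamma$ (together with their left-to-right child orderings) overcounts realisations by exactly $|\mathsf{Aut}(\Gamma)|/\pi_v$, so that after dividing through one is left with the last factor $1/\pi_v$ in~(3.8). This step is best handled by a canonical traversal of $\Gamma$ from the external leg followed by the orbit--stabiliser formula applied to the action of $\mathsf{Aut}(\Gamma)$ on rooted spanning trees.

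The main obstacle is the triple bookkeeping at each vertex $V$: its weight $1/\deg(V)!$, the left-to-right sibling freedom, and the half-edge assignments of the $\sfU$-insertions all interact, and one must verify they are not double-counted. I would handle this by splitting the $\deg(V)$ half-edges at each $V$ into three disjoint classes (ancestor in $T$, descendants in $T$, loop-closure endpoints) and tracking the symmetric-group factor of each class independently; reassembling the product across all vertices then yields exactly $\pi_v^{-1}\prod_{i<j}(n_{ij}!)^{-1}$, matching the general symmetry-factor formula~(3.8) and completing the proof.
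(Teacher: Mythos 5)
Your overall strategy --- organise the non-vanishing HPL realisations of $\Gamma$ by which internal edges are produced by $\Iint$ (a rooted spanning tree) and which by $\sfU$ (loop closures), and compare the count against the closed formula \eqref{eq:formula_Sigma} --- is a legitimate setup, but two of your three steps do not go through. First, Lemma \ref{lem:1} cannot be invoked to conclude that ``the tree portion is reproduced with overall coefficient $1$'': the mechanism that cancels the $\tfrac{1}{n!}$ of each vertex there is the free permutation of \emph{all} of its input legs, absorbed by the sum over external-leg permutations in \eqref{eq:scattering_amplitude_tree_level}. A 1PI tadpole has no external input legs, and the legs of a vertex consumed by loop closures may \emph{not} be permuted freely --- only $\binom{n}{k}$ of the $n!$ orderings survive the vanishing condition \eqref{eq:vanish_diag}. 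You flag this ``triple bookkeeping'' as the main obstacle but then only assert that it resolves itself; that assertion is the entire content of the lemma. Second, and more seriously, the final step --- summing over rooted spanning trees and dividing by a claimed overcounting factor $|\mathsf{Aut}(\Gamma)|/\pi_v$ --- is unjustified and circular. Each non-vanishing HPL diagram determines its spanning tree uniquely (the $\Iint$-edges), so the sum over trees of per-tree counts already \emph{is} the total count and nothing may be divided out; moreover, tying that division to $|\mathsf{Aut}(\Gamma)|$ presupposes the identity $\Sigma_\Gamma=1/|\mathsf{Aut}(\Gamma)|$ one is trying to reproduce. The simplest example already breaks the accounting: for the tadpole with one external leg on a quartic vertex joined to a cubic vertex by three propagators, there are three rooted spanning trees, $|\mathsf{Aut}(\Gamma)|=6$ and $\pi_v=1$, while a direct count gives exactly two non-vanishing HPL diagrams, which together with the vertex weights $\tfrac{1}{3!}\cdot\tfrac{1}{2!}$ correctly yield $\Sigma_\Gamma=\tfrac{1}{3!}$; the prescription ``(three trees) $\times$ (per-tree factor) divided by $6$'' does not produce this number under any natural reading of your per-tree factor.

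The paper avoids all of this by never evaluating \eqref{eq:formula_Sigma}: it shows instead that the non-vanishing HPL diagrams are in bijection with the Wick contractions enumerated by algorithm \ref{alg:main}, each contraction being realised exactly once because the asymmetry of $\sfH_0$ forces a unique left-to-right ordering of loop-closure legs versus descendant legs at every vertex (whence the $\binom{n}{k}$ count). The symmetry factor $N_\Gamma/D_\Gamma=1/|\mathsf{Aut}(\Gamma)|$ is then inherited wholesale from the generating-functional computation. If you wish to keep the spanning-tree decomposition, you would still need to carry out this per-vertex matching explicitly; a global appeal to $|\mathsf{Aut}(\Gamma)|$ cannot substitute for it.
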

    \begin{proof}
        This is now readily seen from a comparison of the HPL diagrams with algorithm~\ref{alg:main} for constructing diagrams from the generating functional. We note that in the generating functional approach in steps 3 and 4 we can use arbitrary input legs for closing loops or attaching vertices, while in the HPL diagrams, there is an order: all legs connected to input legs of vertices further below in the diagram have to be to the left of the legs leading to output legs of other vertices or to input legs of vertices further up in the diagram, for example:
        \begin{equation}
            \begin{tikzpicture}[baseline={([yshift=0ex]current bounding box.center)}]
                \matrix (m) [matrix of nodes, ampersand replacement=\&, column sep = 0.3cm, row sep = 0.2cm]{
                    {} \& {} \& {} \& {} \& {} \\
                    {} \& {} \& {} \& {} \& {} \\
                    {} \& {} \& {} \& {} \& {} \\
                    {} \& {} \& {} \& {} \& {} \\
                    {} \& {} \& {} \& {} \& {} \\
                    {} \& {} \& {} \& {} \& {} \\
                    {} \& {} \& {} \& {} \& {} \\
                    {} \& {} \& {} \& {} \& {} \\
                };
                \draw[int] (m-1-3) -- (m-2-3.center) ;
                \draw[int] (m-2-3.center) -- (m-3-1.center);                
                \draw[int] (m-2-3.center) -- (m-3-2.center);                
                \draw[int] (m-2-3.center) -- (m-3-3.center);                
                \draw[int] (m-2-3.center) -- (m-3-4.center);                
                \draw[int] (m-2-3.center) -- (m-3-5.center);                
                \draw[int] (m-3-1) -- (m-8-1.center);                
                \draw[int] (m-3-2) -- (m-8-2.center);                
                \draw[int] (m-3-4) -- (m-6-4);                
                \draw[int] (m-6-4) -- (m-7-4.center) ;
                \draw[int] (m-7-4.center) -- (m-8-3.center);                
                \draw[int] (m-7-4.center) -- (m-8-5.center);                
               \draw[int] (m-3-3) -- (m-5-3.center);                
               \draw[int] (m-5-3.center) -- (m-5-1.west);                
               \draw[int] (m-3-5) -- (m-4-5.center);                
               \draw[int] (m-4-5.center) -- (m-4-1.west);                
            \end{tikzpicture}
        \end{equation}
        For a vertex with $n$ input legs, $k$ of which link to outgoing legs of vertices below, the total combinatorial factor for the allowed permutations is thus $\frac{n!}{k!(n-k)!}=\binom{n}{k}$, and it is this non-trivial factor that is responsible for producing the remaining symmetry factors. 
        
        The ordering of the types of input legs for each vertex by the HPL now simply ensures that identical contractions in the computation of correlators by the generating functional  approach~\eqref{eq:correlators} are not counted twice. For example, the following two HPL-diagrams describe the same contraction of functional derivatives and propagators in~\eqref{eq:correlators}:
        \begin{equation}
            \begin{tikzpicture}[baseline={([yshift=-.5ex]current bounding box.center)}]
                \matrix (m) [matrix of nodes, ampersand replacement=\&, column sep = 0.03cm, row sep = 0.2cm]{
                    {} \& $\phi_1$ \& {} \& {} \\
                    {} \& {} \& {} \& {} \\
                    {} \& {} \& {} \& {} \\
                    {} \& {} \& {} \& {} \\
                    {} \& {} \& {} \& {} \\
                    {} \& {} \& {} \& {} \\
                };
                \draw[int] (m-1-2) -- (m-2-2.center) ;
                \draw[int] (m-2-2.center) -- (m-3-1) ;
                \draw[int] (m-2-2.center) -- (m-3-3) ;
                \draw[int] (m-3-1) -- (m-5-1);
                \draw[int] (m-3-3) -- (m-4-3.center);
                \draw[int] (m-4-3.center) -- (m-5-2) ;
                \draw[int] (m-4-3.center) -- (m-5-4) ;
                \draw[int] (m-5-2) -- (m-6-2.center) ;
                \draw[int] (m-5-1) -- (m-6-1.center) ;
                \draw[int] (m-6-1.center) -- (m-6-2.center) ;
                \draw[int] (m-6-4.south) -- (m-5-4) ;
            \end{tikzpicture}
            ~~~\mbox{and}~~~
            \begin{tikzpicture}[baseline={([yshift=-.8ex]current bounding box.center)}]
                \matrix (m) [matrix of nodes, ampersand replacement=\&, column sep = 0.03cm, row sep = 0.2cm]{
                    {} \& {} \& $\phi_1$ \& {} \\
                    {} \& {} \& {} \& {} \\
                    {} \& {} \& {} \& {} \\
                    {} \& {} \& {} \& {} \\
                    {} \& {} \& {} \& {} \\
                    {} \& {} \& {} \& {} \\
                };
                \draw[int] (m-1-3) -- (m-2-3.center) ;
                \draw[int] (m-2-3.center) -- (m-3-2) ;
                \draw[int] (m-2-3.center) -- (m-3-4) ;
                \draw[int] (m-3-4) -- (m-5-4);
                \draw[int] (m-3-2) -- (m-4-2.center);
                \draw[int] (m-4-2.center) -- (m-5-1) ;
                \draw[int] (m-4-2.center) -- (m-5-3) ;
                \draw[int] (m-5-3) -- (m-6-3.center) ;
                \draw[int] (m-5-4) -- (m-6-4.center) ;
                \draw[int] (m-6-3.center) -- (m-6-4.center) ;
                \draw[int] (m-6-1.south) -- (m-5-1) ;
            \end{tikzpicture}
        \end{equation}
        but should be counted only once; the HPL does this by letting the right HPL-diagram vanish by~\eqref{eq:vanish_diag}. In this manner, uniqueness of the closure of loops (corresponding to the same contraction) is guaranteed by the asymmetry of $\sfH_0$, which enforces the closure of all loops to the bottom left of all involved vertices in the HPL diagrams.
        
        Altogether, the HPL reproduces all Feynman diagrams corresponding to one particular pairing of functional derivatives with sources in~\eqref{eq:vanish_diag} precisely once.
    \end{proof}
    
    \section*{Acknowledgements}
    
    We are very grateful to Martin Wolf for comments on a first draft of this paper. The work of ES was supported by an Undergraduate Research Bursary of the London Mathematical Society.

    \bibliography{bigone}
    
    \bibliographystyle{latexeu}
    
\end{document}